\newtheorem{theorem}{Theorem}
\newtheorem{conjecture}[theorem]{Conjecture}
\newtheorem{corollary}[theorem]{Corollary}
\newtheorem{lemma}[theorem]{Lemma}
\newtheorem{definition}[theorem]{Definition}
\newcommand{\indep}{\perp \!\!\! \perp}
\title{Search and Rescue on a poset}
\thanks{This research was supported by grant N19/007 of the Dutch Ministry of Defence.}
\author{Jan-Tino Brethouwer
\and Robbert Fokkink}
\address{Institute of Applied Mathematics\\
Delft University of Technology\\
Mekelweg 4 \\
2628 CD Delft, The Netherlands\\
ORCiD: 0000-0003-1644-7483\ 0000-0001-8347-4110}
\email{j.f.brethouwer@tudelft.nl\ r.j.fokkink@tudelft.nl} 
\keywords{partially ordered set, search game, win-lose game, game on a graph, Bayesian network.}
\subjclass[2010]{91A05,91A43, 06A07}
\begin{document}

\begin{abstract}
A Search and Rescue game (SR game) is a new type of game on a graph that 
has quickly found applications in scheduling, object detection,
and adaptive search. In this paper, we broaden the definition of SR games
by putting them into the context of ordered sets and Bayesian networks, 
extending known solutions of these games and opening up the way to
further applications.
\end{abstract}
\maketitle

\section{Introduction}

A search-and-rescue game, or SR game, is a two-player win-lose
game that which is played on a \emph{finite} set $X$ of locations.
Hider chooses $h\in X$ and Searcher chooses a permutation
$\sigma$ of $X$, which we call a \emph{search}.
If $\sigma(j)=h$ then $\{\sigma(i)\colon i\leq j\}$ is the
set of
\emph{searched locations}.
For each $x\in X$ there is a Bernoulli random variable $\beta_x$
with known distribution to both players. 
Searcher wins (rescues)
if the product of $\beta_x$ over
the searched locations is equal to one,
and otherwise she loses.
In particular, the search halts as soon as $\beta_x=0$.
Searcher's
payoff $\Pi(\sigma, h)$ is the probability that she wins,
which we call the \emph{rescue probability}. 
Hider wins if he is not rescued, i.e., we have a suicidal Hider.

SR games were first defined by Lidbetter in~\cite{LidbetterSR} and quickly proved to be a fruitful
avenue of new study. 
They have been applied to scheduling problems~\cite{leus, agnetis}, object detection~\cite{balaska}, rendezvous problems~\cite{leone},
and adaptive search~\cite{lidbetter2023}.
For such applications it is natural to impose an order on the set of locations, as 
some jobs can be performed only if other jobs have finished.
In our paper, we therefore consider SR games on a partially ordered set $X$.
A search now has to respect the order and is only allowed if $\sigma(i)<\sigma(j)$ implies $i<j$.
For instance, if $m\in X$ is a maximal  
element for which $x<m$ for all $x\not=m$, then a search halts in $m$. 
We call this an ordered search-and-rescue game, or simply an OSR game.
We also consider the stronger restriction that $\sigma$ is only allowed
if $i<j$ implies $\sigma(i)<\sigma(j)$.
A function with this property is called a  \emph{chain}
~\cite{Stanley} and we say that this game is
a chained SR game, or simply a CSR game.
If we represent the partial order by a Hasse diagram, then we get a
search game on a (directed) graph, which is a topic of its own~\cite{gal}. 
A search is a path on the Hasse diagram in which Searcher may skip
some vertices but is not allowed to backtrack. She can
either search a node or skip it without turning back. 
This appears to be a new game on a graph.

The original SR game as defined by Lidbetter is
a game on an (undirected) graph. 
Searcher chooses a path and Hider chooses a vertex $h$.
Searcher's payoff is the product of all $\beta_v$ over all vertices
$v$ in the path prior to $h$. 
More specifically, let the path be a walk along the vertices
$v_1,v_2,\ldots, v_n$ and let $v_j$ be the first vertex that
is equal to $h$. Then Searcher's payoff is the expected value of
\[
\prod_{x\in\{v_1,\ldots,v_j\}} \beta_x,
\]
i.e., the product of the success probabilities.
Note that $\{v_1,\ldots,v_j\}$ is a multi-set. Searcher is
allowed to backtrack. Even if $x$ occurs twice or more in
the multi-set, then still $\beta_x$ is only sampled once.
If the graph is a rooted tree and if
Searcher can only start from the root, then Hider
selects a leaf because all internal nodes of the tree are dominated.
Lidbetter proves that an optimal search consists of a mix of 
depth-first searches. 
For instance, consider the following example from~\cite{LidbetterSR}
of a rooted tree with three leaves $a,b,c$ and one internal node. The depth
first searches in terms of arrivals at the leaves are $abc, bac, cab, cba$.
\begin{figure}[htbp]
\centering
\begin{tikzpicture}[nodes={draw, circle}, ->]
\node (a) [label=above:{r}] {1/2}
    child  { node (x) [label=above:{i}] {3/5} 
        child { node (b) [label=below:{a}] {1/3} }
        child { node [label=below:{b}] {1/2} }
    }
    child { node [label=below:{c}] {2/3} };
\end{tikzpicture}
\caption{A tree with three leaves $a,b,c$ root $r$ and internal node $i$. Success probabilities displayed in the nodes.}\label{fig:1}
\end{figure}
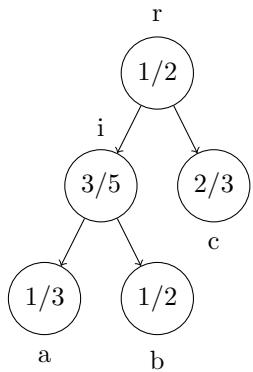
If $h=a$ and the depth-first search is $abc$ then Searcher's payoff is $\frac 1 {10}$. If
the depth-first search is $bac$ then the payoff is $\frac 1{20}$.

In our approach, this SR game on a tree in Fig.~\ref{fig:1} can be defined  
as an OSR game on the unordered set $X=\{a,b,c\}$ with correlated random variables $\beta_a,\beta_b,\beta_c$. The nodes $r,i$ correspond to events that determine
whether or not Searcher rescues. For instance, let $r$ be the event ``no
storm" and let $i$ be the event``high visibility". A rescue in $a$ and $b$
requires no storm and high visibility. A rescue in $c$ only requires
no storm. On top of that, the locations $a,b,c$ each have particular rescue probabilities,
dependent on these events $r,i$. In this approach the tree is a \emph{Bayesian network}~\cite{jensen, pearl}
in which $r,i$ are latent variables.
More formally, let $Pr(x)=Pr(\beta_x=1)$ be the probability of a 
\emph{successful search} of location $x$.
If we put $Pr(a)=\frac 1{10}, Pr(b)=\frac 3{20}, Pr(c)=\frac{1}{3}$ and we put
the conditional 
probabilities $Pr(b|a)=\frac 12, Pr(c|a)=Pr(c|b)=Pr(c|b,a)=\frac 23$,
then we the probability distribution is determined.
We retrieve Lidbetter's game on a tree, in which Searcher can only
search the leaves. The probabilities of the internal nodes correspond
to the correlations between the events $a, b, c$.

What happens to Lidbetter's depth-first search theorem if we replace
the Bernoulli random variables by other random variables? Can we find
effective bounds on the value of the game? What happens if the search
is ordered? These are the
motivating questions for this study.

\medbreak
The paper is organized as follows. We first introduce notation and
terminology. 
We then solve the CSR and OSR games on a poset for uncorrelated Bernoulli
random variables. Suprisingly, despite having a more restricted strategy space,
the CSR game is more difficult to solve. 
We then consider correlated random
variables, extending the work of Lidbetter in~\cite{LidbetterSR} and linking
the SR game to Bayesian networks.
We are able to give bounds on the value of the SR game, but we
can only solve it for simple posets under restrictions on the probability
distribution of the Bernoulli random variables.
In the final section we discuss some further generalizations of SR games.

\section{Definitions and notation}

In this paper we consider games on a partially ordered set (poset) $(X,<)$ with an
associated collection
of Bernoulli random variables $\beta_x$ indexed by $X$. 
An SR game is an example of a \emph{search game} between an immobile hider and a 
mobile searcher on a finite set of locations~\cite[Ch 3]{alperngal}.
The strategy space of player~II (Hider) is the set of locations $X$ and the strategy space
of player I (Searcher) are permutations of subsets of $X$.
Usually, in a search game the payoff is measured in duration or costs and Hider
is the maximizer.
In an SR game, payoff is measured in probability and Hider
is the minimizer.

Throughout the paper we need some basic notions and results 
on posets, for which~\cite{Stanley} is a standard reference.
The set $\{1,\ldots,n\}$ with its standard total order is denoted $[n]$.
A function $f\colon [n]\to X$ is order-reflecting if $f(i)<f(j)$ implies $i<j$.
We consider injective functions only. Locations can be searched only once.
The function is a chain if $i<j$ implies $f(i)<f(j)$, in which case the image $f([n])$
is a totally ordered subset of $X$.
We associate functions with searches and denote them by~$\sigma$ instead of~$f$.

We have a collection of Bernoulli random variables $\beta_x$ with $x\in X$.
These random variables may be \emph{dependent}. 
We write $Pr(S)$ for the probability that $\beta_x=1$ for all $x\in S$. 
If $S$ contains just one or two elements, we write $Pr(x)$ or $Pr(x,y)$.
$Pr(X)$ is the probability that all searches are successful. 

Hider selects a single element $h\in X$.
Searcher selects an order-reflecting function~$\sigma$.
Let $k$ be the minimal number such that $h\in \sigma([k])$. 
If there is no such $k$ then Searcher receives $0$ and Hider receives~$1$.
Otherwise
Searcher's payoff is 
\begin{equation}\label{payoff} 
    \Pi(\sigma,h)=Pr(\sigma([k]))
\end{equation}
\noindent and Hider's payoff is the
complementary probability.
There is an abundance of P's:
$\Pi$ is payoff and $Pr$ is probability.
We say that this is an \emph{OSR game}. 
It models a process of progressive
search. Once $x$ is searched all 
unsearched $y<x$
become inaccessible.
We have a \emph{CSR game} if Searcher is limited to chains. 
It models a process of urgently progressive search.
Once $x$ is searched all $y<x$ and all $y$ 
that are unrelated to $x$ become inaccessible.
We denote the value of an SR game by $V(X)$.
It will be clear from the context what type of SR game we consider.

Both players face a decision problem in which they weigh their options. 
Ratios of probabilities play a role and therefore it is convenient
to consider the \emph{odds} of events. For $x\in X$ we denote
the odds of failure by
\[o_x=\frac{1-Pr(x)}{Pr(x)}.\]
If $Pr(x)=0$ then Hider will hide in $x$ and cannot be rescued. 
We require that $Pr(x)>0$
for all $x\in X$ so the odds are well-defined.
For $S\subset X$ we denote the sum of the odds by
\[
O_S=\sum_{x\in S} o_x.
\]
It
is a quantity that often turns up in decision problems~\cite{bruss}.

Lidbetter proved that an optimal
search and rescue on a tree can be described by weighed coin
tosses,
as in a behavior strategy~\cite{maschler}.
It goes as follows.
Suppose $i$ is an internal node with children $a$ and $b$.
Let $S\subset X$ be the leaves that are descendants of $i$ and let
$A$ and $B$ be the descendants of $a$ and $b$. 
In particular, $S=A\cup B$.
Let $V(A)$ be the value of the subgame on the tree with root $a$,
and let $V(B)$ be the value for $b$. Searcher flips a coin with odds
\begin{equation}\label{flipS}
\frac 1{V(A)}-\frac{Pr(B)}{V(B)}\colon \frac1 {V(B)}-\frac{Pr(A)}{V(A)}
\end{equation}
to decide between $a$ and $b$. 
After reaching a leaf, Searcher continues by backtracking as in
depth-first search.
Note that $V(A)$ is the probability of rescue in subgame $A$, which
is greater than or equal to $Pr(A)$, the probability of 
successful search of all locations.
Therefore $\frac 1{V(A)}\geq 1$ and $\frac{Pr(B)}{V(B)}\leq 1$, so
these odds are well-defined. Also note that the odds favor the subgame with
the smaller rescue probability. 
An optimal search is most likely to start in the most difficult subset first.

Hider can also select $h$ optimally by tossing coins and going down the tree. 
At node $i$ the odds
of choosing between $A$ and $B$ are
\begin{equation}\label{flipH}
\frac {1-Pr(A)}{V(A)}\colon \frac {1-Pr(B)} {V(B)}.
\end{equation}
A suicidal Hider obviously has a preference for the set with the least rescue probability
and the largest probability of failure of the search.

\section{Uncorrelated Search and Rescue}

In this section the random variables $\beta_x$ are uncorrelated.

\begin{lemma}\label{lem:0}
    Suppose $(X,\prec)$ is an extension of $(X,<)$, i.e., $x<y$ implies $x\prec y$.
    Then $V(X,\prec)\leq V(X,<)$ for the OSR game and $V(X,\prec)\geq V(X,<)$ for
    the CSR game.
\end{lemma}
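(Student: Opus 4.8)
The plan is to compare the strategy spaces of the two games under the two orders and exploit the fact that the payoff depends only on which set of locations has been searched when Hider's location is reached. The key observation is that extending the order $<$ to $\prec$ changes the set of admissible searches: a search is order-reflecting (for the OSR game) precisely when it lists no element before any of its strict predecessors, so imposing \emph{more} order relations $\prec$ can only \emph{forbid} searches that were previously allowed. Thus the set of admissible OSR searches for $(X,\prec)$ is a \emph{subset} of those for $(X,<)$. Conversely, a search is a chain (for the CSR game) when it respects the order in the strong sense $i<j \Rightarrow \sigma(i)<\sigma(j)$; here adding relations to $\prec$ can only \emph{relax} the requirement that distinct searched elements be pairwise comparable, so the set of admissible CSR chains for $(X,\prec)$ \emph{contains} those for $(X,<)$.

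With this in hand, the argument is a routine minimax monotonicity. First I would fix the convention that the value $V(X)$ is $\min_h \max_\sigma$ in mixed strategies, where Searcher (the maximizer) ranges over distributions on admissible searches and Hider (the minimizer) over distributions on $X$; since both strategy spaces are finite the minimax theorem applies and the value is well defined. For the OSR game, Searcher's admissible set shrinks when passing from $<$ to $\prec$, so for every fixed Hider strategy the maximum over Searcher's strategies can only decrease, whence $V(X,\prec)\le V(X,<)$. For the CSR game, Searcher's admissible set grows, so the maximum can only increase, giving $V(X,\prec)\ge V(X,<)$. The payoff $\Pi(\sigma,h)=Pr(\sigma([k]))$ itself is unchanged by the order extension, because it is a function only of the searched set and of $h$, not of the order; this is the point that makes the comparison clean.

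The one step requiring care is the direction of the containment of strategy spaces, and in particular verifying that it is the \emph{searches} (Searcher's moves), not Hider's choices, that are affected — Hider always simply picks an element of $X$, so his strategy space is identical in both orders. I would therefore spell out explicitly the two inclusions: for OSR, if $\sigma(i)<\sigma(j)$ fails to imply $i<j$ under $<$ it a fortiori can fail under the larger relation $\prec$, so $\{\sigma : \sigma \text{ order-reflecting for } \prec\} \subseteq \{\sigma : \sigma \text{ order-reflecting for } <\}$; for CSR, if $\sigma$ is a chain for $<$ then $i<j$ implies $\sigma(i)<\sigma(j)$, hence $\sigma(i)\prec\sigma(j)$, so $\sigma$ is a chain for $\prec$ as well, giving the reverse inclusion.

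I expect the main obstacle to be purely expository rather than mathematical: stating the set inclusions in the correct direction and matching them to whether Searcher is the maximizer gains or loses options. Once the inclusions are fixed, the inequalities follow immediately from the monotonicity of $\max$ under enlarging or shrinking the feasible set, so no real calculation is needed.
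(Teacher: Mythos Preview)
Your proposal is correct and follows essentially the same approach as the paper: both arguments observe that order-reflecting maps for $\prec$ are order-reflecting for $<$ (so Searcher's OSR strategy space shrinks under extension), while chains for $<$ are chains for $\prec$ (so Searcher's CSR strategy space grows), and then invoke monotonicity of the value in Searcher's feasible set. The paper's proof is just a terser version of what you wrote; your added remarks about the payoff being order-independent and Hider's space being unchanged are correct and make the argument more explicit.
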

\begin{proof}
If a map is order-reflecting for $\prec$, then it also is order-reflecting
for $<$. In the OSR game, the strategy space for Searcher does not decrease
if we replace $\prec$ by $<$. Therefore, the value of the OSR game does not decrease
if we replace $\prec$ by $<$. If a map is a chain for $<$, then it also is
a chain for $\prec$. By the same argument, the value of the CSR game does not
decrease if we replace $<$ by $\prec$.
\end{proof}

\textbf{Unordered $X$.} Suppose that the partial order on $X$
is trivial, i.e., the only order relation is $x=x$.
Chains are singletons and in the CSR game
Searcher can only search a single location. The payoff
matrix is diagonal with probabilities $Pr(x)$ on the diagonal.
It is optimal for both players to select $x$ with probability 
inversely proportional to
$Pr(x)$. 
The value of the CSR game can be expressed neatly in terms of
the cardinality of $X$ plus the sum of the odds
of failure:
\begin{equation}\label{singleshot}
\frac 1{|X|+O_X}.
\end{equation}

In the OSR game every search is allowed.
This game is one of the motivating examples that
led Lidbetter to define SR games.
It is related to single machine scheduling, see~\cite{LidbetterSR}.
Hider chooses~$x$ with probability
proportional to $o_x$. Searcher starts a search in~$x$
with probability proportional to $o_x$, and continues randomly. 
The value of the OSR game is equal to
\begin{equation}\label{simplesearch}
\frac{1-Pr(X)}{O_X}.    
\end{equation}
In the degenerate case when $Pr(X)=1$ the rescue is
always successful and the odds of failure add up to zero. In this case the value of the
game is equal to 1.

\medbreak
\textbf{Totally ordered $X$.} 
If $X$ is totally ordered, then 
an order-reflecting function is a chain, which means that
the CSR game and the OSR game are the same.
We can identify $X$ with $\{1,\ldots,n\}$.
A search halts as soon as it reaches the greatest element $n$.
Suppose Searcher limits herself to searches in which she
starts in $i$ and searches the remaining locations in increasing order.
Then Hider and Searcher both have $n$ pure strategies and
the payoff matrix $M$ is triangular.
\begin{equation}\label{triangular}
M=
\left[
\begin{array}{rrrrl}
p_1&p_1p_2&p_1p_2p_3&\cdots&p_1p_2p_3\cdots p_n\\
0 & p_2&p_2p_3&\cdots&p_2p_3\cdots p_n\\
0 & 0 & p_3&\cdots&p_3\cdots p_n\\
\vdots&\vdots&\vdots&\ddots&\vdots\\
0&0&0&\cdots&p_n
\end{array}
\right]
\end{equation}
The optimal Searcher strategy in this restricted game
is to start from $i>1$ with probability
proportional to $o_i$ and to start from $1$ with probability
inversely proportional to $p_1$.
The optimal Hider strategy is to choose location $i<n$ with a probability
proportional to $o_i$ and location $n$ with probability 
inversely proportional to $p_n$.
The value of the game with matrix $M$ is equal to
\begin{equation}\label{sumoftheodds}
\frac {1}{1+O_X}.
\end{equation}
Now we limited Searcher's strategies, but notice that Hider's strategy
remains the same if we reorder the locations $1,\ldots,n-1$ in a random way.
Hider's
strategy remains optimal in the full game and therefore we have solved the game.

\begin{lemma}\label{lem:1}
    The value of the OSR game on a poset $(X,<)$ with uncorrelated Bernoulli random variables
    is contained in the interval $\left[\frac{1}{1+O_X},\frac{1-Pr(X)}{O_X}\right].$
    The value of the CSR game is contained in the interval $\left [\frac{1}{|X|+O_X},\frac{1}{1+O_X}\right ].$
\end{lemma}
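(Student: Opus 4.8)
The plan is to sandwich the given poset $(X,<)$ between two extremal orders and then invoke Lemma~\ref{lem:0}. On one side sits the trivial order $<_0$, which has no strict relations at all; the implication ``$x<_0 y$ implies $x<y$'' is then vacuously true, so $(X,<)$ is an extension of $(X,<_0)$. On the other side sits a linear extension $<_T$ of $<$, which exists because $X$ is finite and which by definition satisfies $x<y \Rightarrow x<_T y$, so $(X,<_T)$ is an extension of $(X,<)$. The two endpoints of this chain are exactly the cases already solved above: the unordered set and the totally ordered set.

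For the OSR game, Lemma~\ref{lem:0} says that passing to an extension can only decrease the value. First I would apply it to the extension $(X,<)$ of $(X,<_0)$, which gives $V(X,<)\leq V(X,<_0)$; since $V(X,<_0)=\frac{1-Pr(X)}{O_X}$ by \eqref{simplesearch}, this is the upper bound. Next I would apply it to the extension $(X,<_T)$ of $(X,<)$, which gives $V(X,<_T)\leq V(X,<)$; since $V(X,<_T)=\frac{1}{1+O_X}$ by \eqref{sumoftheodds}, this is the lower bound. Together these confine $V(X,<)$ to the interval $\left[\frac{1}{1+O_X},\frac{1-Pr(X)}{O_X}\right]$.

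For the CSR game the inequalities in Lemma~\ref{lem:0} run the other way: passing to an extension can only increase the value. The same two comparisons now give $V(X,<)\geq V(X,<_0)=\frac{1}{|X|+O_X}$ from \eqref{singleshot}, which is the lower bound, and $V(X,<)\leq V(X,<_T)=\frac{1}{1+O_X}$ from \eqref{sumoftheodds}, which is the upper bound. Hence $V(X,<)$ lies in $\left[\frac{1}{|X|+O_X},\frac{1}{1+O_X}\right]$.

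The only real point of care is bookkeeping: one must match each direction of the inequality in Lemma~\ref{lem:0} to the correct endpoint, since the OSR and CSR values move oppositely under order extension. Once the two bracketing orders are identified, both pairs of bounds drop out of the values already computed for the trivial and totally ordered cases, and no further computation is needed.
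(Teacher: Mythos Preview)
Your proof is correct and follows essentially the same approach as the paper: sandwich $(X,<)$ between the trivial order and a linear extension, then apply Lemma~\ref{lem:0} together with the values already computed in \eqref{singleshot}, \eqref{simplesearch}, and \eqref{sumoftheodds}. Your write-up is more explicit about which direction of the monotonicity inequality yields which bound, but the argument is the same.
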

\noindent Note that $Pr(X)$ is small and $O_X$ is large for large $X$. 
The bounds on the value of the OSR game produce a narrow interval if $X$
is large. We will return to this lemma in the next section for
correlated random variables.

\begin{proof}
    The value of the OSR game on an unordered $X$ is equal to $\frac{1-Pr(X)}{O_X}$, which by
    lemma~\ref{lem:0} puts an upper bound on the value. For the lower bound, consider any extension of $(X,<)$ to a total order. This restricts Searcher and gives the lower bound.
    For the CSR game, the same argument works in the opposite direction by lemma~\ref{lem:0}.
\end{proof}

\textbf{The multi-stage OSR game.}
Let $X=X_1\cup\cdots\cup X_n$ be a union of disjoint sets and put $x<y$ if and only
if $x\in X_i$ and $y\in X_j$ for $i<j$. 
In other words, $X$ is an ordinal sum of unordered subsets,
see~\cite[p 100]{Stanley}.
We say that $X_j$ is \emph{stage} $j$.
Once a search enters a stage it cannot continue in an earlier stage.
The game on this poset is very similar to the game on a total order.
Note that it is optimal to end with a full search of the final stage.
Searching an extra location cannot decrease the rescue probability.

Let $V_j$ be the value of the OSR game restricted to $X_j$,
which is unordered.
We write $P_j=Pr(X_j)$ and $O_j=O_{X_j}$.
By equation~\ref{simplesearch} we have  $V_j=\frac{1-P_{j}}{O_{j}}$.
As in the game on a total order, 
suppose Searcher limits herself to searches of consecutive stages $j, j+1,\ldots,n$.
Within each $X_i$ her search is optimal for the restricted game on this set.
Under this restriction, Searcher essentially only selects the initial stage $j$
and therefore has $n$ strategies.
In response, Hider also selects a stage and applies his optimal strategy, so he essentially 
has $n$ strategies as well. The payoff matrix $L$ is very similar to the
matrix for the total order.
\[
L=
\left[
\begin{array}{rrrrl}
V_1&P_1V_2&P_1P_2V_3&\cdots&P_1P_2P_3\cdots V_n\\
0 & V_2&P_2V_3&\cdots&P_2P_3\cdots V_n\\
0 & 0 & V_3&\cdots&P_3\cdots V_n\\
\vdots&\vdots&\vdots&\ddots&\vdots\\
0&0&0&\cdots&V_n
\end{array}
\right]
\]
The solution of this restricted game is similar to that of the total order. 
Hider chooses stage $j<n$ with probability
proportional to $\frac{1-P_j}{V_j}=O_j$ and stage $n$ with probability 
proportional to $\frac 1{V_n}=\frac {O_n}{1-P_n}$.
Note that $V_n=1$ if $P_n=1$.
Searcher starts in stage $j>1$ with probability
proportional to $\frac{1}{V_j}-\frac{P_{j-1}}{V_{j-1}}$ and starts
in stage $1$ with probability inversely proportional to $V_1$.
In this way, Searcher makes sure that she searches stage $j$
with probability proportional to $\frac 1 {V_j}$.
The value of the game with matrix $L$ is
\begin{equation}\label{multistage}
\frac 1{\frac{P_n O_n}{1-P_n}+O_X}.
\end{equation}
We need to argue that Searcher cannot do better than this.
Suppose we combine $X_{j-1}\cup X_j$ for $j<n$, 
then the number of stages goes down by one which gives more
freedom to Searcher, but we argue that
Hider's optimal strategy does not change.
In the game with $n$ stages, Hider selects stage $j<n$
with probability proportional to $O_j$. 
He then selects $x\in X_j$ with probability proportional
proportional to $o_x$.
This is the same as first selecting stage $X_{j-1}\cup X_j$
with probability proportional to $O_{j-1}+O_j$
and then tossing a coin with odds
$O_{j-1}\colon O_j$ to select from $X_j$.
We can combine or split up stages without affecting
the optimal Hider strategy. The game is
equivalent to the two-stage game on
the ordinal sum $(X\setminus X_n)\cup X_n$.
All searches that start in $X\setminus X_n$ and end
with a full search of $X_n$ have expected probability
of rescue given by equation~\ref{multistage}.
\medbreak
An element $m\in X$ is a \emph{maximum} if there are no $x>m$. 
Let $M\subset X$ be the subset of maxima. An optimal search
always ends in $M$. If $\sigma$ does not end in $M$, then it
can be extended by some $m\in M$ without decreasing the
probability of rescue. Since there is no order relation between
different elements of $M$, an optimal search ends with a full
search of $M$.

\begin{theorem}\label{thm:1}
    Let $M\subset X$ be the subset of maxima. Let $P_M$ be the product of all success
    probabilities over $m\in M$ and let $O_M$ be the sum of the odds over $M$.
    The value of the OSR game
    on $X$ is equal to
    \begin{equation}\label{OSRvalue}
    \frac 1{\frac{P_M O_M}{1-P_M}+O_X}.
    \end{equation}
\end{theorem}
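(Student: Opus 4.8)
The plan is to reduce the general poset to the multi-stage (ordinal sum) situation that has just been solved, exploiting the observation recorded immediately above the theorem that an optimal search ends with a full search of the set of maxima~$M$. Granting this, the poset should behave exactly like a two-stage ordinal sum whose top stage is~$M$: every search that matters first roams through $X\setminus M$ (respecting~$<$) and then sweeps all of~$M$. Since the value in equation~\ref{multistage} depends only on the parameters $P_n,O_n$ of the final stage and on the global quantity $O_X$, and is insensitive to how the lower stages are subdivided, I expect the value to be obtained simply by substituting the top stage $M$ into that formula, which is precisely equation~\ref{OSRvalue}. The multistage invariance under combining and splitting lower stages is what makes the internal order of $X\setminus M$ irrelevant, so the argument should only ever see $X\setminus M$ through the single number $O_{X\setminus M}=O_X-O_M$.

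Rather than invoking equation~\ref{multistage} as a black box, I would first pin down the candidate optimal strategies and verify the value head-on. For Hider, the natural guess read off from the multistage solution with top stage $M$ is to hide at $x\in X\setminus M$ with probability proportional to $o_x$ and at $m\in M$ with probability proportional to $o_m/(1-P_M)$; the normalising constant then works out to exactly $\left(\frac{P_M O_M}{1-P_M}+O_X\right)^{-1}$, i.e.\ the claimed value $V(X)$. Against any search that ends with a full sweep of~$M$ the payoff telescopes cleanly: writing $Q$ for the running product of success probabilities accumulated before entering~$M$, the contributions from the lower portion sum to $V(X)(1-Q)$ and those from the final $M$-sweep sum to $V(X)\,Q$, giving payoff exactly $V(X)$. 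For Searcher I would build a strategy by first selecting an entry point into the ordinal sum, as in the matrix $L$ of the multistage analysis, and then searching each part optimally, and check that it forces every pure Hider response to yield at least $V(X)$.

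The hard part, and the step I would scrutinise most, is the reduction itself, namely the assertion that an optimal search may be taken to end with a full search of~$M$. The extension argument only shows that the \emph{last} location can be chosen in~$M$; it does not by itself show that \emph{all} maxima can be deferred to the end. The difficulty is acute when a maximal element is incomparable to some non-maximal elements, so that it can be searched early while still leaving those non-maxima accessible afterwards. In that case moving the early maximum to the end strictly lowers Searcher's payoff against a Hider sitting there, the two searches are genuinely distinct, and the telescoping above no longer closes, so I would not expect Hider's guessed strategy to remain optimal. I would therefore focus the whole effort on justifying the ordinal-sum reduction precisely in this situation, either by proving that such early maxima are never used in an optimal Searcher mixture, or, failing that, by strengthening the hypothesis so that every element of $X\setminus M$ lies below every element of~$M$, making $X$ a genuine ordinal sum $(X\setminus M)\oplus M$, which is exactly the regime in which the multistage derivation applies.
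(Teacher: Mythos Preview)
You have put your finger exactly on the difficulty, and it is fatal: the reduction to a two-stage ordinal sum cannot be justified in general, and in fact the theorem as stated is false. The paper's own argument is a sandwich via Lemma~\ref{lem:0}. For the lower bound it linearises $X\setminus M$ and places $M$ on top, obtaining a genuine extension of $(X,<)$ whose multistage value is~\eqref{OSRvalue}; this step is correct. For the upper bound it asserts that the two-stage ordinal sum $(X\setminus M)\oplus M$ (with $X\setminus M$ unordered) is a coarsening of $(X,<)$, so that Lemma~\ref{lem:0} again gives~\eqref{OSRvalue}. That step fails for precisely the reason you isolate: when some $m\in M$ is incomparable in $X$ to some $x\in X\setminus M$, the ordinal sum introduces the new relation $x\prec m$, so it is an \emph{extension} of $(X,<)$ rather than a coarsening, and Lemma~\ref{lem:0} delivers the inequality in the wrong direction. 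Your worry that ``moving the early maximum to the end'' is not innocuous, and that the telescoping for the candidate Hider strategy breaks, is exactly what goes wrong.

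A concrete counterexample settles the matter. Take $X=\{a,b,c,d\}$ with only the relations $a<b$ and $c<d$, all success probabilities equal to $\tfrac12$. Here $M=\{b,d\}$, $P_M=\tfrac14$, $O_M=2$, $O_X=4$, and formula~\eqref{OSRvalue} gives $3/14$. The order-reflecting search $a,b,c,d$ is legal in $X$ (the maximum $b$ is searched before the non-maximum $c$), and against it your candidate Hider mix yields payoff $25/112>3/14$. Solving the $4$-location game directly one finds $V(X)=5/22$: Hider plays $a,c$ each with probability $5/22$ and $b,d$ each with probability $3/11$, while Searcher mixes the searches $abcd,\;cdab$ with probability $3/11$ each and $bcd,\;dab$ with probability $5/22$ each, equalising the rescue probability at $5/22$ for every location. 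Thus your first proposed route---showing that early maxima are never used in an optimal Searcher mix---cannot succeed; the optimal Searcher mix in this example uses $abcd$. Your second route, strengthening the hypothesis so that every element of $X\setminus M$ lies below every element of $M$, is the correct repair: under that assumption $X$ really is the ordinal sum $(X\setminus M)\oplus M$, both halves of the sandwich in the paper's proof become legitimate applications of Lemma~\ref{lem:0}, and the value is indeed~\eqref{OSRvalue}.
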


\begin{proof}
    Let $(X\setminus M) \cup M$ be the ordinal sum of two unordered subsets. 
    The value
    of the OSR game on this poset is given by equation~\ref{OSRvalue}. 
    By lemma~\ref{lem:0} this puts an upper bound on the value of the OSR
    game on $X$. Conversely, extend the partial order of $X\setminus M$ to a total order
    and consider the OSR game on the resulting poset. By lemma~\ref{lem:0} the value
    of the game on this poset puts a lower bound. This is a multi-stage game that ends in 
    $M$ such that all previous stages are singletons.
    The value of this game is again given by equation~\ref{OSRvalue} and the lower
    bound equals the upper bound.
\end{proof}

This theorem settles OSR games and we now turn our attention to
CSR games. 
Our next lemmas sharpen the bounds of lemma~\ref{lem:1}.

\begin{lemma}\label{lem:2}
    Let $M\subset X$ be the subset of maxima. If Hider hides in $m\in M$ with probability
    inversely proportional to $Pr(m)$ and in $x\in X\setminus M$ with probability proportional
    to $o_x$, then the rescue probability is at most \[ \frac{1}{|M|+O_X}\] for
    any search. This puts an upper bound on the value of the game.
\end{lemma}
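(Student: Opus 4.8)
The plan is to analyze the expected rescue probability of an arbitrary chain search against the proposed Hider strategy and to show it never exceeds $\frac{1}{|M|+O_X}$. Since we are now in the CSR game, Searcher is restricted to chains, and this restriction is exactly what makes the bound work.

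First I would compute the normalizing constant of Hider's strategy. Writing $\frac{1}{Pr(m)}=1+o_m$, the total weight is $\sum_{m\in M}(1+o_m)+\sum_{x\in X\setminus M}o_x=|M|+O_M+(O_X-O_M)=|M|+O_X$. Hence Hider hides in $m\in M$ with probability $\frac{1+o_m}{|M|+O_X}$ and in $x\in X\setminus M$ with probability $\frac{o_x}{|M|+O_X}$.

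Next I would fix a chain search with image $x_1<\cdots<x_k$. Because searches respect the chain order, when Hider sits at $x_j$ the searched set is exactly $\{x_1,\ldots,x_j\}$, so the rescue probability is $P_j:=\prod_{i\leq j}Pr(x_i)$ (with $P_0=1$), while if Hider sits outside the image the payoff is $0$. Multiplying by Hider's weights and factoring out $\frac{1}{|M|+O_X}$, the expected rescue probability equals $\frac{1}{|M|+O_X}$ times a bracketed sum of terms $o_{x_j}P_j$ for $x_j\notin M$ and $\frac{1}{Pr(x_j)}P_j$ for $x_j\in M$. The crux is the telescoping identities $o_{x_j}P_j=P_{j-1}-P_j$ and $\frac{1}{Pr(x_j)}P_j=P_{j-1}$, combined with the structural fact that a chain contains \emph{at most one} maximum, namely its top element $x_k$: if $x_j$ were a maximum with $j<k$, then $x_j<x_{j+1}$ would contradict maximality.

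With these facts the bracket collapses. If $x_k\in M$, the non-maximal terms telescope to $P_0-P_{k-1}=1-P_{k-1}$ and the single maximal term contributes $P_{k-1}$, so the bracket equals exactly $1$. If $x_k\notin M$, all terms are of the telescoping type and the bracket equals $1-P_k<1$. In either case the bracket is at most $1$, so the rescue probability is at most $\frac{1}{|M|+O_X}$ for every chain, which bounds the value of the game from above. The step I expect to require the most care is justifying that only the top element of a chain can be a maximum and correctly tracking the boundary term $P_{k-1}$ in the telescope; the algebraic identities themselves are routine.
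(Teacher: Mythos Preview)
Your proof is correct and follows essentially the same approach as the paper: compute the normalizing constant $|M|+O_X$, then evaluate the expected rescue probability along an arbitrary chain via the telescoping identity $o_{x_j}P_j=P_{j-1}-P_j$, obtaining exactly $1$ when the chain ends in a maximum and strictly less otherwise. Your write-up is in fact more explicit than the paper's, which compresses the telescope into a single displayed equality and leaves the ``at most one maximum per chain'' observation implicit.
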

\begin{proof}
    For any chain $x_1, x_2,\ldots, x_k$ that ends in $x_k\in M$, the expected rescue probability is
    \[
    \frac{(1-Pr(x_1))+Pr(x_1)(1-Pr(x_2))+\cdots + Pr(x_1)\cdots Pr(x_{k-1})}{|M|+O_X}
    =
    \frac{1}{|M|+O_X}.
    \]
    If the chain does not end in a maximum, then the rescue probability is less.
\end{proof}

An antichain is a subset $S\subset X$ such that any two elements of $S$ are incomparable.
The \emph{width} of $X$ is the size of its largest antichain. By Dilworth's theorem~\cite{dilworth}, it 
equals the minimum number of disjoint chains into which the set can be partitioned

\begin{lemma}\label{lem:3}
    Let $w$ be the width of $X$. There exists a mixed Searcher
    strategy with rescue probability at least \[\frac{1}{w+O_{X}}.\]
    This puts a lower bound on the value of the game.
\end{lemma}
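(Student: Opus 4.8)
The plan is to decompose $X$ into chains and let Searcher randomize over them. By Dilworth's theorem, since $X$ has width $w$, we may partition $X$ into $w$ disjoint chains $C_1,\dots,C_w$. Each $C_k$, being totally ordered, is a subset on which the CSR game coincides with the game on a total order, whose value is $\frac{1}{1+O_{C_k}}$ by equation~\ref{sumoftheodds}. In particular, Searcher has a mixed strategy supported on increasing searches within $C_k$ that guarantees rescue probability at least $\frac{1}{1+O_{C_k}}$ against any Hider location lying in $C_k$, and such an increasing search is a legitimate chain in $X$, hence an admissible CSR strategy on $X$.

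First I would define Searcher's global strategy: choose chain $C_k$ with probability
\[
q_k=\frac{1+O_{C_k}}{w+O_X},
\]
and, conditional on this choice, follow the optimal total-order strategy on $C_k$. The crucial bookkeeping step is to check that the $q_k$ form a probability distribution. Because the $C_k$ partition $X$, the odds are additive, so $\sum_k O_{C_k}=O_X$, and there are exactly $w$ chains, so $\sum_k(1+O_{C_k})=w+O_X$ and hence $\sum_k q_k=1$.

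Next I would bound the rescue probability against an arbitrary Hider location $x$, say $x\in C_k$. Since the chains are disjoint, a search confined to some $C_j$ with $j\neq k$ never reaches $x$ and contributes nothing; only the event that Searcher picks $C_k$ can produce a rescue. Conditioning on that event, the within-chain strategy rescues with probability at least $\frac{1}{1+O_{C_k}}$, so the overall rescue probability is at least
\[
q_k\cdot\frac{1}{1+O_{C_k}}=\frac{1+O_{C_k}}{w+O_X}\cdot\frac{1}{1+O_{C_k}}=\frac{1}{w+O_X}.
\]
As this holds for every $x$, Searcher guarantees rescue probability at least $\frac{1}{w+O_X}$, which is the asserted lower bound on the value.

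The construction is essentially forced, so I do not expect a serious obstacle beyond setting it up correctly. The one point requiring care is the interface between the per-chain games and the global game: I must confirm that the optimal total-order strategy on $C_k$ remains admissible and retains its guarantee when embedded in $X$, which it does, since increasing searches in $C_k$ are chains in $X$ and disjointness of the $C_k$ rules out any cross-chain rescue. The pleasant feature is that the weights $q_k$ drop out exactly from the identity $\sum_k(1+O_{C_k})=w+O_X$, so the bound is attained with equality in the analysis.
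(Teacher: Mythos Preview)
Your proof is correct and essentially identical to the paper's: both decompose $X$ into $w$ chains via Dilworth's theorem and have Searcher play the optimal total-order strategy within a randomly chosen chain, with weights arranged so that every location receives rescue probability exactly $\frac{1}{w+O_X}$. The only cosmetic difference is that the paper describes the randomization at the level of individual starting locations (weight $1/Pr(x)$ on each chain's minimum and $o_x$ on the rest), whereas you describe it as a two-stage procedure (first pick $C_k$ with probability $q_k$, then play optimally within); unwinding the normalizations shows these are the same mixed strategy.
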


\begin{proof}
Consider the following strategy for Searcher.
She decomposes $X$ into $w$ disjoint chains and selects a
random $x\in X$ to start the search.
The search proceeds along the unique chain that contains it. 
Searcher selects initial elements $x$ with probability inversely proportional
to $Pr(x)$ and all other elements with probability proportional to $o_x$.
Suppose Hider selects $h$ and let $S=\{x_1,\ldots,x_k\}$ be the
chain that contains $h$ with $x_j=h$.
Within this subset $S$ we have the game on a total order in which
Searcher applies the optimal strategy. The rescue probability
is the same for all elements. Hider is indifferent between all elements
of $X$.
\end{proof}

In particular, we have solved the game if $M$ is a maximal antichain of $X$.
For instance, if the Hasse diagram is a tree. The upper bound on the game 
depends on $|M|$ which can be computed in linear time by depth-first search. 
The lower bound on the game depends on $w$ which can be computed in polynomial time~\cite{felsner}.

\medbreak
\textbf{The multi-stage CSR game.} 
We reconsider the ordinal sum $X=X_1\cup\cdots\cup X_n$ of unordered subsets,
this time for the CSR game, in which Searcher can only search one location
from each $X_i$. In the OSR game, Searcher can search all locations from $X_i$.
As before, we write $O_j$ for the sum of the odds over the elements in the
stage $X_j$ for $i$ and we write $O_{\leq j}$ for the sum of the odds over
all stages $\leq j$. 
Let $k$ be such that $|X_k|+O_{\leq k}$ is \emph{maximal}.
By lemma~\ref{lem:2} Hider can limit the
rescue probability to 
\[\frac{1}{|X_k|+O_{\leq k}}\] by choosing $h$ from
the first $j$ stages only. We show that Searcher can actually 
achieve this rescue probability.

We add an initial node $s$ at stage zero and a terminal node $t$ 
at stage $n+1$ with $Pr(s)=Pr(t)=1$. This does not change the game
as $s$ and $t$ can be searched for free and therefore Hider will not select
these nodes.
We represent the search strategy as a network flow on
the Hasse diagram of this order, which
has a single source $s$ and a single sink $t$.
Searcher starts
from $s$ with probability one. 
We need to prove that Searcher achieves a rescue probability
$\geq \frac{1}{|X_k|+O_{\leq k}}$ at each node.
Instead of probability one, we give Searcher a total weight of $|X_k|+O_{\leq k}$ 
at node~$s$. 
Now she needs to make sure that the rescue probability is $\geq 1$
for each node.
The inflow $I_x$ at each node $x$ has to be $\geq \frac{1}{Pr(x)}$ which is
needed to achieve rescue probability one. 
Excess flow can skip the node.
The rescue fails with probability $1-Pr(x)$ and therefore takes away $o_x$
from the flow.
The outflow is $I_x-o_x$. The entire stage $j$ takes away $O_j$ from the flow.
After $j<k$ stages the total flow is reduced to $O_{\leq k}-O_{\leq j}+|X_k|$.
The next stage requires an inflow of $O_{j+1}+|X_{j+1}|$ which can be satisfied
since
\[
O_{\leq k}-O_{\leq j}+|X_k|\geq O_{j+1}+|X_{j+1}|
\]
by the fact that $O_{\leq i}+|X_i|$ is maximal at $k$.
After $k$ stages, the outflow is equal to $|X_k|$.
The required inflow for a stage $j>k$ is equal to $O_j+|X_j|$ and this
stage takes away $O_j$ from the flow.
Since we have 
\[
|X_k|-O_{k+1}-\ldots -O_j\geq |X_j|
\]
the required inflow can always be met.
This guarantees a rescue probability $\geq 1$ for each node and if we normalize
the probability than we find that the value of the game is
$\frac{1}{O_k+|X_k|}$.
\medbreak
An antichain $A\subset X$ is \emph{maximal} if for every $x\in X$ there is an $a\in A$ such that
either $a\leq x$ or $x\leq a$. A maximal antichain partitions the poset $X$ into 
$A^{-}=\{x\colon \exists a\in A, x\leq a\}$ and $A^+=\{x\colon \exists a\in A, a<x\}$.
In other words, $A$ is a cut.
The elements of $A$ are maximal in the poset $A^{-}$ with the order inherited from $X$.
Hider can guarantee a rescue probability $\leq O_{A^-}+|A|$ by hiding in $x\in A^{-}$
with probability $o_x$ if $x\not\in A$ and probability inversely proportional to $Pr(x)$
if $x\in A$.

\begin{theorem}\label{thm:2}
Let $A$ be a maximal antichain that maximizes $O_{A^{-}}+|A|$.
The value of the CSR game on $X$ is equal to
\[
\frac{1}{O_{A^{-}}+|A|}.
\]
\end{theorem}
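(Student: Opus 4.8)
The plan is to establish matching upper and lower bounds, the former being essentially the paragraph preceding the theorem, which I would phrase as an application of Lemma~\ref{lem:2} to the sub-poset $A^{-}$. First I would record that, because $A$ is a maximal antichain, $A^{-}$ is a down-set whose set of maximal elements is precisely $A$. Consequently, for any chain $\sigma$ in $X$ and any Hider location $h\in A^{-}$, the prefix of $\sigma$ up to and including $h$ lies entirely in $A^{-}$, since a down-set is searched before anything above it. Hence the rescue probability against a Hider confined to $A^{-}$ equals the rescue probability of a chain in the standalone poset $A^{-}$, and Lemma~\ref{lem:2}, applied with the role of $M$ played by $A$ and of $O_X$ by $O_{A^{-}}$, bounds it by $1/(|A|+O_{A^{-}})$. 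Choosing the antichain that maximizes $|A|+O_{A^{-}}$ gives the sharpest such bound, so $V(X)\le 1/(|A^{*}|+O_{(A^{*})^{-}})$ for the maximizing $A^{*}$.

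The harder half is the matching lower bound. Unlike the OSR case of Theorem~\ref{thm:1}, I cannot sandwich the value between two ordinal sums, since a general poset need not coarsen to an ordinal sum; instead I would generalize the network-flow argument of the multi-stage CSR game. I would represent a Searcher strategy as a flow from a source $s$ (below everything) to a sink $t$ (above everything) on the comparability digraph of $X$, with an edge $x\to y$ whenever $x<y$ and $Pr(s)=Pr(t)=1$; an $s$--$t$ path is then a chain. Giving $s$ a total weight $N=|A^{*}|+O_{(A^{*})^{-}}$, the aim is a flow in which every node $x$ receives arrival at least $1/Pr(x)$, so that after normalizing by $N$ the rescue probability at each location is at least $1/N$. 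The key simplification, exactly as before, is that the comparability digraph is transitively closed, so any excess arriving at a node can bypass it along a direct edge from a predecessor to a successor. This lets me route precisely $1/Pr(x)=1+o_{x}$ through each node, collapsing the multiplicative success factor into an additive loss $o_{x}$ with surviving outflow $1$, and turning the task into an ordinary flow-feasibility problem with node lower bounds.

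It then remains to show that supply $N$ suffices, and this is where the main obstacle lies. I would invoke a max-flow--min-cut argument of the Dilworth flavour already used in Lemma~\ref{lem:3} to identify the binding obstruction. Because the elements of an antichain are pairwise incomparable, no single chain can cover two of them, so covering a maximal antichain $A$ forces $|A|$ distinct surviving units to reach its level simultaneously; and to descend there from $s$ these units must first absorb the cumulative coverage losses of the entire down-set, totalling exactly $O_{A^{-}}$. I would argue that the minimum supply for which the covering flow is feasible therefore equals $\max_{A}\bigl(|A|+O_{A^{-}}\bigr)$, the maximum over maximal antichains, noting that any antichain extends to a maximal one without decreasing $|A|+O_{A^{-}}$. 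Setting $N$ to this value makes the flow feasible, and normalizing produces a Searcher strategy with rescue probability at least $1/N$ everywhere, matching the upper bound. The delicate point to get right is the cut computation itself: verifying that the loss charged below a cut is precisely $O_{A^{-}}$ and that no obstruction other than a maximal antichain can be worse, i.e.\ that the minimum cut is genuinely an antichain. This Menger/Dilworth-type identification is the step I expect to require the most care.
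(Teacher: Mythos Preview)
Your overall architecture matches the paper's exactly: the upper bound via the Hider strategy on $A^{-}$ (Lemma~\ref{lem:2}), and the lower bound via a dissipative network flow from a source $s$ to a sink $t$ on the comparability digraph, with required inflow $1/Pr(x)$ and loss $o_x$ at each node, asking for the minimum supply that makes this feasible.

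Where your proposal stops short is precisely the ``delicate point'' you flag, and this is where essentially all of the paper's technical work lies. Your heuristic that an antichain forces $|A|$ surviving units after absorbing losses $O_{A^{-}}$ gives the easy direction (every maximal antichain is an obstruction), but a direct Menger/Dilworth appeal does not settle the converse: this is \emph{not} a standard max-flow/min-cut instance, since the flow is dissipative with node \emph{lower} bounds rather than edge capacities, so you cannot simply cite a min-cut theorem to conclude that the worst obstruction is an antichain. The paper instead writes the feasibility problem as an explicit LP (minimize outflow from $s$ subject to the two inequalities $\text{inflow}(x)\ge 1+o_x$ and $\text{inflow}(x)-\text{outflow}(x)\ge o_x$), takes the dual with variables $g_x,h_x$, observes that setting $g_a=1$ on an antichain $A$ and $h_b=1$ on $A^{-}\setminus A$ is dual-feasible with value $O_{A^{-}}+|A|$, and then---this is the key device---shows by a randomized rounding argument (threshold $T$ uniform on $[0,1]$, \`a la Trevisan's proof of max-flow/min-cut) that \emph{any} dual feasible solution can be rounded to one of this antichain form with the same expected objective. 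Hence the dual optimum is attained at an antichain, and LP duality gives Searcher a flow of the required size. Your sketch would become a proof once you supply this LP-duality-plus-rounding mechanism; without it the identification of the min-cut with an antichain remains an assertion.
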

\begin{proof}
Hider can achieve this rescue probability. We need to prove that Searcher can achieve it as well. 
We add a source $s$ and a sink $t$ with rescue probability one,
with $s<x<t$ for all $x\in X$. Searcher's task is to find a network flow such that the
inflow $I_x\geq \frac 1{Pr(x)}$ for all nodes. Each node reduces the flow by $o_x$.
We have a network flow problem but unlike the standard max-flow min-cut problem 
it is max-cut min-flow and the flow is
dissipative. 
However, it is possible to modify the proof of the
max-flow min-cut theorem for this problem.
We have adapted the proof given by Trevisan \cite[Ch 15]{trevisan}.

Let $E$ be the edge set and let $f_{xy}$ be the flow on $(x,y)\in E$.
There are two constraints at each $x\in X$: inflow at least
$\frac{1}{Pr(x)}=o_x+1$ to guarantee rescue probability one,
and dissipation (inflow minus outflow) 
is at least $o_x$ to guarantee a full search of the node. 
\begin{equation*}
\begin{array}{ll@{}ll}
\text{minimize}  & \displaystyle\sum\limits_{(s,x)\in E} f_{sx} &\\
\text{subject to }\forall x\in X\\
                 & \displaystyle\sum_{(v,x)\in E} f_{vx} &\ \geq o_x+1,\\
                               & \displaystyle\sum_{(v,x)\in E}  f_{vx} - \displaystyle\sum\limits_{(x,w)\in E} f_{xw}&\ \geq o_x,\\
\end{array}
\end{equation*}
where all $f_{xy}\geq 0$.
The dual problem has two variables for the two constraints at each $x\in X$
and has one constraint for each $(v,x)\in E$.
\begin{equation*}
\begin{array}{ll@{}ll}
\text{maximize}  & \displaystyle\sum_{x\in X} (o_x+1)g_x + o_x h_x, &\\
\text{subject to }\\
                 & g_x + h_x\leq 1,&\forall (s,x)\in E,\\
                 & g_x + h_x - h_v\leq 0,&\forall (v,x)\in E,\ v\not=s,\\
\end{array}
\end{equation*}
where all $g_x,h_x\geq 0$. If we put $g_a=1$ for all $a\in A$ a 
maximal antichain and $h_b=1$ for all $b<A$, and zero elsewhere,
then the constraints of the dual problem are satisfied.
This choice is feasible and
in this case \[\sum\limits_{x\in X} (o_x+1)g_x + o_x h_x=O_{A^-}+|A|.\]
We need to show that this is a solution of the dual problem for some antichain,
because then the minimax theorem implies that Searcher has a feasible flow that
achieves the required rescue probability.

Let $g_x$ and $h_x$ be a feasible solution of the dual problem.
Pick $0\leq T\leq 1$ uniformly at random and put $\bar h_x=1$
if $h_x\geq T$ and $\bar g_x=1$ if $h_x+g_x\geq T> h_x$.
For all other $x$ we
set $\bar h_x=\bar g_x=0$. 
This implies that $\bar h_x+\bar g_x\leq 1$ and if
$\bar h_x+\bar g_x=1$ then $\bar h_v=1$ for all $(v,x)\in E$.
Therefore $\bar g_x$ and $\bar h_x$ are feasible.
We claim that $A=\{x\colon \bar g_x=1\}$
is a (random) antichain. In other words, there is no path between
any $a,b\in A$. To see this, notice that $h_a<T$ and by the
second constraint $g_x+h_x<T$ for all successors of $a$.
If $g_b+h_b\geq T$ then $b$ is not a successor of $a$ and
$A$ is indeed an antichain. 

The value of
$
\displaystyle\sum_{x\in X} (o_x+1)\bar g_x + o_x \bar h_x
$
is equal to $O_{A^-}+|A|$, since all $x\in A^{-}$ either have $\bar g_x=1$ or
$\bar h_x=1$ and all $x\in A$ have $\bar g_x=1$.
The expected value is equal to
\[
\displaystyle\sum_{x\in X} (o_x+1) Pr(g_x + h_x \geq T > h_x) + \sum_{x\in X} Pr(h_x \geq T)
= \displaystyle\sum_{x\in X} (o_x+1) g_x + \sum_{x\in X} h_x,
\]
which is maximal. 
Therefore, there must be an antichain such that $O_{A^-}+|A|$ equals the solution
of the dual problem, as required.
\end{proof}

The optimal Searcher strategy is a feasible solution of an LP problem and
can therefore be computed in polynomial time in terms of $|X|$. 
The optimal Hider strategy
can also be computed in polynomial time, by solving the dual problem and
random rounding.

\section{Correlated Search and Rescue}

In this section the random variables $\beta_x$ are correlated.
This makes the games much more difficult to solve. We are only able to
solve the games for simple posets under strong restrictions on
the distribution of the Bernoulli random variables.

We first consider the OSR game on an unordered $X$:
any permutation of $X$ is an admissible search.
If there
are only two locations $X=\{a,b\}$ we have a simple symmetric matrix game
\[
\begin{pmatrix}
    Pr(a)&Pr(a,b)\\
    Pr(a,b)&Pr(b)
\end{pmatrix},
\]
which is easy to solve. For three locations the game is already much more elaborate.

\begin{definition}\label{def:1}
    For three locations $x,y,z$ we say that $x$ and $y$ are 
    {\em{conditionally independent}} with respect to $z$
    if  \[Pr(x,y|z)=Pr(x|z)Pr(y|z).\]
    It is denoted $(x\indep y|z)$.
\end{definition}

The common way to define conditional independence is for events,
which in our case are successful
searches of locations. 
We note that if $A,B,C$ are three events such that $A\indep B|C$, then
it does not necessarily hold that $A\indep B|\overline C$, where $\overline C$
denotes the complementary event of $C$.

Conditional independence is often interpreted in terms of
learning~\cite{pearl}.
An equivalent way to define $(x\indep y|z)$ is $Pr(x|z,y)=Pr(x|z)$. 
In terms of learning:
if we learn that $z$ has happened, then the probability of
$x$ gets a Bayesian update $Pr(x|z)$, but if
we then learn that $y$ has also happened, we learn nothing new.
In the SR game on three locations $\{a,b,c\}$
in Fig.~\ref{fig:1} we have that $Pr(c|a)=Pr(c|b)=Pr(c|b,a)$ 
and therefore both $a\indep c|b$ and
$b\indep c|a$.
\medbreak
\textbf{The OSR game on three unordered locations.} 
Let $X=\{a,b,c\}$ and 
suppose that both $a\indep c|b$ and
$b\indep c|a$. We build a graphical model
of this probability distribution in Fig.~\ref{fig:2}
in analogy of the SR game on a tree in Fig.~\ref{fig:1}.

\begin{figure}[h]
\centering
\tikzset {
    my_circle/.style = {circle, draw= black}
}
\begin{tikzpicture}[->, sibling distance=2cm, level distance = 1.5cm]
\node (a) {}
    child {node [my_circle, fill={gray!50}] {q}
    child { node [my_circle, fill={gray!50}]{r} 
        child { node [my_circle](b) {a} edge from parent node[left] {$Pr(a|b)$}}
        child { node [my_circle]{b} edge from parent node[right] {$Pr(b|a)$} }
        edge from parent node[left] {$\frac{Pr(b|c)}{Pr(b|a)}$}
    }
    child { node [my_circle]{c} edge from parent node[right] {$Pr(c|b)$}}
     edge from parent node[left] {$\frac{Pr(c)}{Pr(c|b)}$}};
\end{tikzpicture}
\caption{The tree of Fig.~\ref{fig:1} revisited as a network with events $a,b,c$
and internal nodes $q,r$.
The events $a,b,c$ represent successful rescue in these three locations. 
The weights of the directed edges are ratios of conditional probabilities. 
}\label{fig:2}
\end{figure}
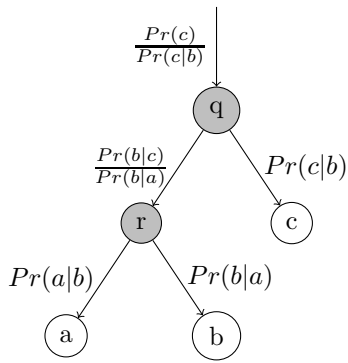
The three leaves of the graphical model are labeled by
the three locations. Contrary to Fig~\ref{fig:1},
the internal nodes $q,r$ do not
necessarily correspond to events. 
The edge weights are \emph{ratios}
of conditional probabilities, which may be larger than one,
which is why we say this is a \emph{pseudo-Bayesian network}.
In Bayesian networks, nodes correspond to
events and edge weights are conditional probabilities, which are at most one.
To compute the rescue probability of a leaf, multiply the edge weights
of the path to that leaf. More generally, to compute
$Pr(S)$ for $S\subset \{a,b,c\}$ multiply the
edge weights of its spanning tree. For instance,
\[
Pr(a,c)=\frac{Pr(c)}{Pr(c|b)}\cdot\frac{Pr(b|c)}{Pr(b|a)}\cdot 
Pr(c|b)\cdot Pr(a|b)=\frac{Pr(b,c)\cdot Pr(a)}{Pr(b)}
\]
which is equal to $Pr(c|b)\cdot Pr(a)=Pr(c|a)\cdot Pr(a)$, for this tree. This is
indeed equal to $Pr(a,c)$. 

In this setup, searching a leaf requires the probabilities
of the path leading up to it, as in Lidbetter's SR game.
If all edge weights are bounded by one, 
then we are back at the original SR game on a tree. 
In fact, there is no need to bound the weight 
$\frac{Pr(c)}{Pr(c|b)}$
of
the first incoming edge.
It is a common
divisor of all rescue probabilities. 
It is a sunk cost that can be replace by one,
without changing the optimal strategies of the players.
then we multiply all payoffs by $\frac{Pr(c|b)}{Pr(c)}$ and the game
remains essentially the same. 

Suppose Searcher first successfully searches $b$.
The probabilities of the two remaining locations are now
updated to $Pr(a|b)$ and $Pr(c|b)$.
The Bayesian factors for these updates satisfy
\[
\frac{Pr(c|b)}{Pr(c)}\leq\frac{Pr(a|b)}{Pr(a)}
\]
because the paths of $a$ and $b$ have more edges
in common than the paths of $c$ and $b$.
The weights of these edges are taken out by the update,
which is why the Bayes factor for $a$ is larger and
hence the incentive to now search $a$ is stronger. This is a Bayesian interpretation of
depth-first search.
\medbreak
In order to extend this example from three to more locations, we need some
further terminology. First of all, conditional independence extends from
three events as in definition~\ref{def:1} to collections of random variables.
It is a central topic of study in probabilistic learning~\cite{studeny}. 
We will need a rather special notion that is tailored to SR games.
\begin{definition} [co-independence]
    Let $X=A\cup B$ be a partition into disjoint sets such that
    for any $A'\subset A, B'\subset B$ and any $a\in A, b\in B$ it holds that
    $Pr(A'|B')=Pr(A'|b)$ and $Pr(B'|A')=Pr(B'|a)$. Furthermore, 
    $Pr(A'|a,b)=Pr(A'|a)$ and $Pr(B'|a,b)=Pr(B'|b)$. Then we say that $A$
    and $B$ are \emph{co-independent}, denoted $X=A\Vert B$ or simply $A\Vert B$.
\end{definition}
Informally
speaking, each element of $A$ teaches us the same about $B$ and any element from $B$
teaches us more about $B$ than all of $A$.  

\begin{definition}
    Let $X$ be a set of two locations or more.
    We say that $X$ is \emph{reducible} if it admits a partition $X=A\cup B$
    into disjoint
    non-empty subsets $A\Vert B$. We say that it is \emph{completely reducible} if 
    every subset $Y\subset X$ containing two elements or more is reducible.
\end{definition}
In particular, if $X$ is completely reducible then $X=A_0\Vert A_1$, which can be partitioned
into $A_0=A_{00}\Vert A_{01}$ and $A_1=A_{10}\Vert A_{11}$ etc.
In the SR game on a tree, the probability distribution on the leaves is
completely reducible.
For instance, in the SR game of Fig.~\ref{fig:1} we have $\{a,b\}\Vert{c}$
and $\{a\}\Vert\{b\}$.

\begin{lemma}\label{Bfactor}
    If $X=A\Vert B$ then the Bayes factor
    \[
    \frac{Pr(A'|B')}{Pr(A')}
    \]
    is the same for any nonempty $B'\subset B$ and $A'\subset A$.
\end{lemma}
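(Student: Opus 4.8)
The plan is to exploit a hidden symmetry in the Bayes factor. First I would rewrite the quantity purely in terms of joint probabilities. Whenever the conditional probabilities in question are defined (that is, $Pr(A')>0$ and $Pr(B')>0$, which is implicit in the statement), the definition of conditional probability gives $Pr(A'|B')=Pr(A'\cup B')/Pr(B')$, so that
\[
\frac{Pr(A'|B')}{Pr(A')}=\frac{Pr(A'\cup B')}{Pr(A')\,Pr(B')}=\frac{Pr(B'|A')}{Pr(B')}.
\]
The value of this rewriting is that the middle expression is manifestly symmetric under interchanging the roles of $A'$ and $B'$; this symmetry is what lets me apply the two halves of the co-independence hypothesis separately and then glue the results together.

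Next I would feed in the first half of the definition, $Pr(A'|B')=Pr(A'|b)$ for every $b\in B$. Because the right-hand side does not involve $B'$, the factor $Pr(A'|B')/Pr(A')$ does not depend on the choice of $B'$. (Incidentally, comparing two distinct singletons shows that $Pr(A'|b)$ is itself the same for all $b\in B$, but I will not need this.) Symmetrically, the second half of the definition, $Pr(B'|A')=Pr(B'|a)$ for every $a\in A$, shows that $Pr(B'|A')/Pr(B')$ does not depend on the choice of $A'$.

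Finally I would combine the two observations using the identity displayed above. The common value of the outer expressions is on one hand equal to $Pr(A'|B')/Pr(A')$, which is independent of $B'$, and on the other hand equal to $Pr(B'|A')/Pr(B')$, which is independent of $A'$. A quantity depending on neither $A'$ nor $B'$ is constant, which is exactly the claim.

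I do not expect a substantive obstacle: the only ingredients used are the first two conditions of co-independence, and essentially all the content is the recognition of the symmetry that binds them. The one point deserving a word of care is the rewriting $Pr(A'|B')=Pr(A'\cup B')/Pr(B')$, which is legitimate precisely because the denominators are nonzero; the final two conditions in the definition of co-independence are not needed for this lemma and I would simply leave them unused.
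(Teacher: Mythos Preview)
Your proof is correct and follows essentially the same approach as the paper: both exploit the Bayes symmetry $\frac{Pr(A'|B')}{Pr(A')}=\frac{Pr(B'|A')}{Pr(B')}$ and then apply the two halves of the co-independence definition to strip away the dependence on $B'$ and $A'$ respectively. The only cosmetic difference is that the paper carries this out as a single chain of equalities terminating in the explicit constant $\frac{Pr(b|a)}{Pr(b)}$ for fixed $a\in A$, $b\in B$, whereas you argue the two independences separately and then combine them.
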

\begin{proof}
    Let $a\in A$ and $b\in B$. By co-independence and the properties of conditional probability
    \[\frac{Pr(A'|B')}{Pr(A')}=\frac{Pr(A'|b)}{Pr(A')}=\frac{Pr(b|A')}{Pr(b)}=\frac{Pr(b|a)}{Pr(b)},\]
    which does not depend on $A'$ and $B'$. 
\end{proof}

\begin{theorem}
    Suppose that $X$ is completely reducible such that $X=A_0\Vert A_1$ 
    and each subsequent $A_w=A_{w0}\Vert A_{w1}$ for a binary word $w$.
    Then the probability distribution can be represented
    by a pseudo-Bayesian tree with nodes $A_w$ and root $X$, which has an incoming edge
    without an initial node.
    Each $A_w$ has sibling $A_{\bar w}$, such that $w$ and $\bar w$ have common prefix $v$ and only differ in the last digit. 
    In particular, $A_v$ is the parent of $A_w$. The incoming edge of $A_w$ has weight
    \[
    \frac{Pr(a_{w0}|a_{\bar w})}{Pr(a_{w0}|a_{w1})},
    \]
    for arbitrary elements $a_w\in A_w$.
    The incoming edge of the root has weight $\frac{Pr(a_0)}{Pr(a_0|a_1)}$, because
    the root has no sibling.
    If
    $T$ is the spanning tree of $S\subset X$ then $Pr(S)$ is the product of the
    weights of the edges in $T$. 
\end{theorem}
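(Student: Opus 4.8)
The plan is to induct on the recursive reduction of $X$, grafting the pseudo-Bayesian trees for $A_0$ and $A_1$ onto a common root $X=A_0\Vert A_1$. First I would record the two consequences of co-independence and Lemma~\ref{Bfactor} that drive everything. Write $\kappa$ for the common Bayes factor $\frac{Pr(A'|B')}{Pr(A')}$ of the split $X=A_0\Vert A_1$, which is constant in $A'\subset A_0$, $B'\subset A_1$ by Lemma~\ref{Bfactor} and symmetric since $\frac{Pr(A'|B')}{Pr(A')}=\frac{Pr(B'|A')}{Pr(B')}$. Then for $S_0\subset A_0$ and $S_1\subset A_1$ one obtains the factorisation
\[
Pr(S_0\cup S_1)=Pr(S_0)\,Pr(S_1|S_0)=\kappa\,Pr(S_0)\,Pr(S_1),
\]
which is the engine of the argument, while the theorem's root weight $\frac{Pr(a_0)}{Pr(a_0|a_1)}$ equals $1/\kappa$.

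Next I would settle well-definedness of the edge weights. For an interior node $A_w$ the weight splits as $\frac{Pr(a_{w0}|a_{\bar w})}{Pr(a_{w0})}\cdot\frac{Pr(a_{w0})}{Pr(a_{w0}|a_{w1})}=\kappa_v/\kappa_w$, the Bayes factor $\kappa_v$ at the parent $A_v$ divided by the Bayes factor $\kappa_w$ at $A_w$; for a leaf $A_w=\{x\}$ the same formula collapses to $\kappa_v\cdot Pr(x)$. Since each $\kappa$ is constant by Lemma~\ref{Bfactor}, every edge weight is independent of the chosen representatives, as the theorem asserts.

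For the inductive step I would apply the theorem to the strictly smaller completely reducible blocks $A_0$ and $A_1$ with their marginal distributions (which agree with $Pr$ after marginalising out the opposite block), obtaining trees $T_0,T_1$, and graft them under a new root $X$. Nothing below the two subtree roots changes, because the edge-weight formula for an interior node refers only to its sibling and its own children, never to ancestors above its parent. The only edges that move are the former subtree-root edges $\frac{Pr(a_{00})}{Pr(a_{00}|a_{01})}$ and $\frac{Pr(a_{10})}{Pr(a_{10}|a_{11})}$, each of which is multiplied by $\kappa$ to become the full-tree weights $\frac{Pr(a_{00}|a_1)}{Pr(a_{00}|a_{01})}$ and $\frac{Pr(a_{10}|a_0)}{Pr(a_{10}|a_{11})}$; a dangling root edge of weight $1/\kappa$ is prepended. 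That both multipliers equal the same $\kappa$ is exactly the symmetry of the Bayes factor in Lemma~\ref{Bfactor}.

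The product formula then drops out by bookkeeping. For $S=S_0\cup S_1$ the spanning tree $T(S)$ is the dangling root edge together with, for each nonempty block, the edge $X\to A_i$ and the spanning tree of $S_i$ inside $T_i$; writing $W$ for the weight product and using $W_i(S_i)=Pr(S_i)$ from the induction hypothesis, each nonempty block contributes one extra factor $\kappa$ and the dangling edge contributes $1/\kappa$, so $W(S)=\kappa\,Pr(S_0)Pr(S_1)=Pr(S)$ when both blocks meet $S$ and $W(S)=Pr(S_i)$ when only one does, while the empty set gives the empty product $1=Pr(\emptyset)$. Equivalently one sees the telescoping $\frac{1}{\kappa_{v_0}}\cdot\frac{\kappa_{v_0}}{\kappa_{v_1}}\cdots\kappa_{v_{k-1}}\,Pr(x)=Pr(x)$ along each root-to-leaf path $X=A_{v_0}\to\cdots\to A_{v_k}=\{x\}$, which gives $W(S)=\prod_{x\in S}Pr(x)\cdot\prod_{\text{branching }A_w}\kappa_{A_w}$ (the product over nodes of $T(S)$ both of whose children meet $S$), identified with $Pr(S)$ by iterating the co-independence factorisation. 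I expect the main obstacle to be precisely this grafting bookkeeping: one must verify that only the two subtree-root edges change and only by the single factor $\kappa$, that this $\kappa$ serves both blocks and is the reciprocal of the root weight so the factors cancel to reproduce the co-independence factorisation, and that the leaf-versus-interior distinction and the empty blocks are handled so that a naive uniform computation does not slip.
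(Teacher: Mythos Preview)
Your argument is correct and follows the same inductive scheme as the paper, with one cosmetic difference in how the induction hypothesis is set up. The paper applies the induction to the \emph{conditional} distributions $Pr(\cdot\mid a_{1-i})$ on each block $A_i$; co-independence then gives $Pr(a_{w0}\mid a_{\bar w},a_{1-i})=Pr(a_{w0}\mid a_{\bar w})$ for $A_w\subset A_i$, so the subtree hanging from $A_i$ in the full tree is literally the tree produced by the inductive call, with no adjustment needed, and one simply computes $Pr(S)=Pr(S_1\mid a_0)\,Pr(S_0)=Pr(S_1\mid a_0)\,Pr(S_0\mid a_1)\cdot\frac{Pr(a_0)}{Pr(a_0\mid a_1)}$. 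You instead apply the induction to the \emph{marginal} distributions on $A_i$ and then multiply each subtree-root edge by the common Bayes factor $\kappa$ when grafting; this is the same computation rearranged, since your $\kappa$-adjusted subtree product equals the paper's conditional product. Your write-up is more explicit than the paper's about well-definedness of the edge weights (the $\kappa_v/\kappa_w$ decomposition) and about the degenerate cases $S_i=\emptyset$ and leaf nodes, which the paper leaves implicit.
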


\begin{proof}
By induction, the conditional distribution $Pr(\cdot | a_i)$ on $A_i$ for $i\in\{0,1\}$
can be represented by pseudo-Bayesian trees. 
These are the two subtrees that spring from the root $X$.
Let $S_i=S\cap A_i$ for $i\in\{0,1\}$. Then
\[
Pr(S)=Pr(S_1|S_0)Pr(S_0)=Pr(S_1|a_0)Pr(S_0).
\]
By induction $Pr(S_1|a_0)$ equals the product of the weights for
its spanning tree $T_1$ in the pseudo-Bayesian subtree for $A_1$.
By lemma~\ref{Bfactor}
\[
Pr(S_0)=Pr(S_0|a_1)\cdot\frac{Pr(a_0)}{Pr(a_0|a_1)}
\]
the right-hand side of which is equal to the product of the
weights over the spanning tree $T_0$ and the weight
of the incoming edge at $X$.
\end{proof}
Note that we may switch the labels $0$ and $1$
and therefore
$    \frac{Pr(a_{w0}|a_{\bar w})}{Pr(a_{w0}|a_{w1})}=
    \frac{Pr(a_{w1}|a_{\bar w})}{Pr(a_{w1}|a_{w0})}$.
    Indeed, this is a consequence of co-independence, since
\[\frac{Pr(a_{w0}|a_{\bar w})}{Pr(a_{w1}|a_{\bar w})}=
    \frac{Pr(a_{\bar w}|a_{w0})Pr(a_{w0})}{Pr(a_{\bar w}|a_{w1})Pr(a_{w1})}=
    \frac{Pr(a_{w0})}{Pr(a_{w1})}=\frac{Pr(a_{w0}|a_{w1})}{Pr(a_{w1}|a_{w0})}.\]
    
    If all weights are bounded by one (with the possible exception
    of the incoming edge of the root, which is a sunk cost),
    then we have a proper Bayesian network of events.
    In this case, the game is equivalent to an
    SR game on a tree.

\begin{corollary}
    Lidbetter's solution of the SR game on a tree extends to the OSR game for
    a completely reducible distribution on an unordered $X$ provided
    that all Bayesian factors $\frac{Pr(a_{w0}|a_{\bar w})}{Pr(a_{w0}|a_{w1})}\leq 1$.
\end{corollary}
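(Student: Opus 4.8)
The plan is to reduce the OSR game on $X$ to Lidbetter's SR game on a tree and then to quote his solution. The first step is to invoke the preceding theorem to represent the completely reducible distribution by a pseudo-Bayesian tree whose leaves are the locations of $X$ and whose internal nodes are the sets $A_w$. The hypothesis of the corollary states exactly that every non-root edge weight, the Bayes factor $\frac{Pr(a_{w0}|a_{\bar w})}{Pr(a_{w0}|a_{w1})}$, is at most one. The incoming edge of the root carries weight $\frac{Pr(a_0)}{Pr(a_0|a_1)}$, a common divisor of all rescue probabilities; as observed before the theorem, normalizing it to one rescales every payoff by a single constant and leaves the optimal strategies of both players unchanged. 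After this normalization all edge weights lie in $(0,1]$ and may be read as genuine conditional success probabilities, so each internal node becomes an event and the pseudo-Bayesian tree becomes a proper Bayesian network of the type in Fig.~\ref{fig:1}.

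The second step is to make the equivalence of the two games precise. In Lidbetter's game Searcher starts at the root, walks the tree with backtracking allowed, samples each $\beta$ at most once, and against a Hider at leaf $h$ earns the expected product of the $\beta$'s over the distinct nodes visited up to her first arrival at $h$. I would match an OSR search, which is a permutation of the leaves, with the walk that arrives at the leaves in that order, returning through the root between successive branches. Since backtracking makes every arrival order achievable, and every walk realizes some arrival order, the two strategy spaces coincide. The decisive point for the payoffs is that once Searcher has arrived at a set $S$ of leaves, the distinct nodes she has visited form precisely the spanning tree of $S$; by the product formula of the preceding theorem the expected product of $\beta$ over that spanning tree equals $Pr(S)$. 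Consequently, if $h$ occupies position $k$ in the permutation $\sigma$, the tree-game payoff equals $Pr(\sigma([k]))$, which is exactly the OSR payoff of equation~\ref{payoff}. The two games therefore have the same payoff matrix.

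With the equivalence in hand, Lidbetter's theorem supplies the solution: an optimal search on the tree is a mixture of depth-first searches driven by the coin-flips of equations~\ref{flipS} and~\ref{flipH}, and under the correspondence above this mixture pulls back to an optimal mixed strategy for the OSR game on $X$. I expect the main obstacle to be the payoff-matching claim of the second paragraph, namely that the distinct nodes visited when the reached leaf set is $S$ are exactly the spanning tree of $S$ no matter which walk Searcher uses. This should follow by a short induction on $|S|$, since arriving at a new leaf adjoins precisely the still-unvisited nodes on its root-path, which are the new edges of the spanning tree. One must also check that inefficient walks, which stray onto nodes outside the spanning tree of the reached leaves, cannot help Searcher; each such node contributes a factor at most one to the payoff, and this is the second place where the hypothesis on the Bayes factors is needed.
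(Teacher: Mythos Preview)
Your proposal is correct and follows the same route as the paper: reduce to Lidbetter's tree game via the pseudo-Bayesian tree of the preceding theorem, use the hypothesis on the Bayes factors to turn it into a genuine Bayesian network (after normalizing the root edge), and then invoke Lidbetter's result. The paper in fact gives no separate proof of this corollary, treating it as immediate from the remark just before it that ``if all weights are bounded by one \ldots\ the game is equivalent to an SR game on a tree''; your write-up is simply a careful expansion of that equivalence, including the domination argument for inefficient walks, which the paper leaves implicit.
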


\begin{corollary}
     Let $X$ be unordered and completely reducible such that all weights on the pseudo-Bayesian network are $\leq 1$ including the weight on the incoming edge at the root. The OSR game on $X$ has value $V(X)\leq \frac{1-Pr(X)}{O_X}$.
\end{corollary}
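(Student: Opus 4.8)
The plan is to exhibit a single Hider strategy that caps Searcher's rescue probability at $\frac{1-Pr(X)}{O_X}$ against every admissible search, which by the minimax theorem bounds the value from above. The natural candidate is the strategy that already solves the uncorrelated unordered game in equation~\ref{simplesearch}: Hider hides in $x$ with probability $o_x/O_X$. Fix any search that visits the locations in the order $x_1,x_2,\ldots,x_n$ and abbreviate $q_j=Pr(x_1,\ldots,x_j)$ with $q_0=1$. Against this Hider the expected rescue probability equals
\begin{equation*}
\frac{1}{O_X}\sum_{j=1}^{n} o_{x_j}\,q_j,
\end{equation*}
so the corollary reduces to the purely probabilistic claim that $\sum_{j=1}^n o_{x_j}q_j\le 1-Pr(X)$ for every ordering of $X$, since $X$ is unordered and Searcher may use any order.

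The key device is a telescoping identity. Writing the conditional odds along the search as
\begin{equation*}
\hat o_j=\frac{1-Pr(x_j|x_1,\ldots,x_{j-1})}{Pr(x_j|x_1,\ldots,x_{j-1})},
\end{equation*}
and using $Pr(x_j|x_1,\ldots,x_{j-1})=q_j/q_{j-1}$, one checks that $\hat o_j\,q_j=q_{j-1}-q_j$, so that $\sum_{j=1}^n \hat o_j q_j=q_0-q_n=1-Pr(X)$ collapses exactly. Thus the target inequality is equivalent to the weighted termwise comparison between the marginal odds $o_{x_j}$ and the conditional odds $\hat o_j$, and the entire content of the corollary is to show that the weight hypothesis forces these to line up in the required direction.

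This comparison is where the pseudo-Bayesian structure enters and is the main obstacle. I would translate the condition that every edge weight of the tree is $\le 1$ into a monotonicity statement for the conditional success probabilities $Pr(x_j|x_1,\ldots,x_{j-1})$ relative to the marginals $Pr(x_j)$, valid along an arbitrary search order. Lemma~\ref{Bfactor} is the right tool: in a block $A\Vert B$ the Bayes factor $Pr(A'|B')/Pr(A')$ does not depend on $A'$ or $B'$ and is read off from the edge weights of the pseudo-Bayesian tree, so the hypothesis that every weight is $\le 1$ constrains uniformly how conditioning on an already-searched block shifts $Pr(x_j|x_1,\ldots,x_{j-1})$ relative to $Pr(x_j)$. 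Proceeding by induction on the complete reducibility $X=A_0\Vert A_1$, I would peel off the highest block containing $x_j$, use co-independence (Lemma~\ref{Bfactor}) to replace the conditioning set by a single representative, and compare $\hat o_j$ with $o_{x_j}$ edge by edge up the tree. Once the sign of $o_{x_j}-\hat o_j$ is settled for every $j$ and every ordering, substituting the telescoping identity yields $\sum_j o_{x_j}q_j \le 1-Pr(X)$, and hence $V(X)\le\frac{1-Pr(X)}{O_X}$. The delicate point—and the step I expect to consume the most care—is that this comparison must hold simultaneously for all search orders, not merely the depth-first ones singled out by the preceding corollary, because the proposed Hider strategy is order-oblivious.
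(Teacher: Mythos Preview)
Your central step fails in the direction that matters. Under the hypothesis that all edge weights are $\leq 1$, the distribution is positively correlated: conditioning on already-successful searches can only raise the next success probability, so $Pr(x_j\mid x_1,\ldots,x_{j-1})\geq Pr(x_j)$ and hence $\hat o_j\leq o_{x_j}$ for every $j$. The termwise comparison therefore points the \emph{wrong way}, and your telescoping identity delivers $\sum_j o_{x_j}q_j\geq \sum_j \hat o_j q_j = 1-Pr(X)$, the reverse of what you need. This is not merely a failure of the termwise route: the aggregate inequality itself is false. Take $X=\{a,b\}$ with $Pr(a)=Pr(b)=\tfrac12$ and $Pr(a,b)=\tfrac13$; the three tree weights are $Pr(a|b)=Pr(b|a)=\tfrac23$ and the root weight $\tfrac{Pr(a)}{Pr(a|b)}=\tfrac34$, all $\leq 1$. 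Against the search $ab$ your Hider (uniform, since $o_a=o_b=1$) gives rescue probability $\tfrac12\cdot\tfrac12+\tfrac12\cdot\tfrac13=\tfrac{5}{12}$, whereas $\frac{1-Pr(X)}{O_X}=\tfrac13$. So the proposed Hider strategy does not cap Searcher at the stated bound; intuitively, positive correlation makes each further search \emph{easier} than in the uncorrelated benchmark that produced the strategy $o_x/O_X$, so this Hider is too generous.

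The paper does not argue via a single Hider witness at all. It invokes the immediately preceding corollary---that Lidbetter's depth-first solution applies when all weights are $\leq 1$---so that the value is given exactly by the $2\times 2$ game on the top split $X=A\Vert B$:
\[
V(X)=\frac{w\bigl(1-Pr(A)Pr(B)\bigr)}{\dfrac{1-Pr(A)}{V(A)}+\dfrac{1-Pr(B)}{V(B)}},
\]
with $w$ the root weight and $Pr(A),Pr(B)$ the products of the weights in the two subtrees. It then bounds this expression by induction on the tree: the inductive hypothesis controls the denominator from below by $O_A+O_B=O_X$, while the numerator is $w-Pr(X)\leq 1-Pr(X)$ since $w\leq 1$. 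The argument works on the value formula directly, not through an order-oblivious Hider strategy; if you want to repair your approach you would need a Hider mixture that is adapted to the tree (as in equation~\ref{flipH}), not the uncorrelated odds mixture.
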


\begin{proof}
    Let the decomposition of $X$ begin with $A\Vert B$ and let $w$ be the weight of the
    incoming edge at the root. Searcher either starts with an exhaustive
    search of $A$ followed by $B$, or vice versa. Hider hides in $A$ or in $B$.
    Essentially we have a $2\times 2$ payoff
    matrix
    \[
    \begin{pmatrix}
        wV(A)&wPr(A)V(B)\\
        wPr(B)V(A)&wV(B)
    \end{pmatrix},
    \]
    where $V(A), V(B)$ are the values of the subgames on the two trees
    that spring from the root and $Pr(A), Pr(B)$ are the products of
    the weights of these two trees.
    Searcher incurs the cost $w$ and flips a coin as in equation~\ref{flipS}. The value of
    the game is 
    \[
    \frac{w(1-Pr(A)Pr(B))}{\frac{1-Pr(A)}{V(A)}+\frac{1-Pr(B)}{V(B)}}.
    \]
    By induction the denominator is $\geq O_A+O_B=O_X$.
    We also have $Pr(X)=w\cdot Pr(A)\cdot Pr(B)$.
    Therefore, the denominator is $w-Pr(X)\leq 1-Pr(X)$ by our assumption
    that $w\leq 1$.
\end{proof}

Lidbetter showed that depth-first search is optimal. 
Searcher
starts at the root and continues until it reaches a leaf 
and backtracks, continuing at an unsearched node
that is closes to the leaf.
What can be said if all Bayesian factors are $\geq 1$?
A search is \emph{backjumping}~\cite{vanbeek} if it 
continues at an unsearched node that is closest to the root.
It is a common method in AI and automatic theorem proving~\cite{restart}
to prevent a search more dispersive.
To specify this, suppose that a node $x$ has two children $a$ and $b$ with
offspring $A$ and $B$. If a search first visits a leaf in $A$ then it does not re-enter
$A$ before visiting a leaf of $B$. 
Backjumping search is the opposite of backtracking search
It is a natural procedure
if all Bayesian factors are $\geq 1$, when it is unlikely that
a search of a nearby leaf will be successful.
Numerical experiments show that it is not true that the optimal search is a mix of 
backjumping searches if all Bayesian factors are $\geq 1$.
They do indicate that the following weaker statement is true.

\begin{conjecture}\label{conj:1}
    Consider the OSR game with a completely reducible distribution
    with all Bayesian factors~$\geq 1$. 
    An optimal Searcher strategy contains a pure strategy
    that is backjumping.
\end{conjecture}

\begin{definition}
    We say that the probability distribution on $X$ is \emph{positively
    correlated} if $Pr(A|B)\geq Pr(A)$ for all $A,B\subset X$.
    We say that it is \emph{negatively correlated} if $Pr(A|B)\leq Pr(A)$.
\end{definition}

This property extends to arbitrary random variables for which it
is known as positive or negative
\emph{association}~\cite{assoc}.
Note that a completely reducible distribution on $X$ with all weights $\leq 1$
is positively correlated.
It is negatively correlated if all weights are $\geq 1$.
For positively correlated $X$ the search gets progressively easier, which is
a general assumption in search games~\cite{set}.
For negatively correlated $X$ it gets progressively more difficult, which
may be a more natural assumption for a rescue operation.

\begin{lemma}\label{lem:3}
    The value of the OSR game on a positively correlated poset $X$ 
    is $\geq \frac{1}{1+O_X}.$ 
    The value of the CSR game on a negatively correlated poset
    $X$ is contained in the interval $[\frac{1}{|X|+O_X},\frac{1}{1+O_X}].$
\end{lemma}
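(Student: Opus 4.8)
The plan is to reduce both statements to the uncorrelated bounds of Lemma~\ref{lem:1} using two elementary tools. The first tool is the pair of correlation inequalities: positive correlation gives $Pr(S)\geq\prod_{x\in S}Pr(x)$ for every $S\subset X$, and negative correlation gives the reverse inequality $Pr(S)\leq\prod_{x\in S}Pr(x)$. Both follow by induction on $|S|$: writing $S=\{x_1,\dots,x_k\}$ and peeling off $x_k$, we have $Pr(S)=Pr(x_k\mid x_1,\dots,x_{k-1})\,Pr(x_1,\dots,x_{k-1})$, and the definition of correlation applied with $A=\{x_k\}$ and $B=\{x_1,\dots,x_{k-1}\}$ bounds the conditional factor by $Pr(x_k)$ from above or below. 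The second tool is the monotonicity of the value of a finite matrix game: if two games have the same finite pure-strategy sets for both players and one payoff matrix dominates the other entrywise, then its value is at least as large. For a fixed poset and fixed marginals the correlated and uncorrelated SR games have identical pure-strategy sets (Searcher's chains or order-reflecting maps depend only on $(X,<)$, and Hider's choices are the elements of $X$), and the distribution enters only through the entries $Pr(\sigma([k]))$. I also note that $O_X$ depends only on the marginals $Pr(x)$, so it is the same quantity in the correlated and uncorrelated games.

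For the OSR game on a positively correlated $X$, I would compare its payoff matrix with that of the uncorrelated OSR game on the same poset. The $(\sigma,h)$ entry equals $Pr(\sigma([k]))$ when $h\in\sigma([k])$ and is $0$ otherwise; by the positive correlation inequality each nonzero entry satisfies $Pr(\sigma([k]))\geq\prod_{x\in\sigma([k])}Pr(x)$, which is exactly the corresponding uncorrelated entry, while the zero entries agree. Hence the correlated matrix dominates the uncorrelated one entrywise, so the correlated value is at least the uncorrelated value, which is $\geq\frac{1}{1+O_X}$ by Lemma~\ref{lem:1}. This yields the first claim.

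For the CSR game on a negatively correlated $X$ the upper bound is symmetric: the negative correlation inequality makes every entry of the correlated payoff matrix at most the corresponding uncorrelated entry, so by monotonicity the correlated value is at most the uncorrelated CSR value, which is $\leq\frac{1}{1+O_X}$ by Lemma~\ref{lem:1}. The lower bound cannot be obtained this way, because entrywise domination now points in the wrong direction, so here I would argue directly with an explicit Searcher strategy. Searcher restricts herself to singleton searches, which are legitimate chains in any poset. Against such searches only the diagonal payoffs $Pr(x)$ occur, and these are the marginals, unaffected by correlation; the resulting game is the unordered CSR game whose value is $\frac{1}{|X|+O_X}$ as in equation~\ref{singleshot}. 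Thus Searcher guarantees at least $\frac{1}{|X|+O_X}$ regardless of the correlation, which gives the lower bound.

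The routine verifications are the two induction arguments for the correlation inequalities and the standard monotonicity fact for matrix-game values; neither presents a real difficulty. The one point that requires care is precisely that the CSR lower bound does \emph{not} follow from the payoff comparison, since negative correlation decreases every entry and thus pushes the value down rather than up. It must therefore be supplied by the correlation-free singleton strategy rather than by reduction to Lemma~\ref{lem:1}, and recognizing that this correlation-free guarantee survives intact is the crux of the argument.
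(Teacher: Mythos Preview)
Your proposal is correct and follows essentially the same route as the paper: establish $Pr(S)\gtrless\prod_{x\in S}Pr(x)$ by induction from the correlation hypothesis, use entrywise comparison of payoff matrices to reduce the OSR lower bound and the CSR upper bound to Lemma~\ref{lem:1}, and obtain the CSR lower bound from the diagonal singleton-search game with value $\frac{1}{|X|+O_X}$. Your exposition is in fact more explicit than the paper's---in particular your remark that the CSR lower bound cannot come from the entrywise comparison and must instead be supplied by the correlation-free singleton strategy is exactly the point the paper handles by passing to the unordered ``worst case''.
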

\noindent In particular, the value of the OSR game on a completely reducible $X$
with all weights $\leq 1$ is contained in the interval $[\frac{1}{1+O_X},\frac{1-Pr(X)}{O_X}]$.
\begin{proof}
For a positively correlated distribution we have that 
\[Pr(A)=Pr(A\setminus \{a\}|a)\cdot Pr(a)\geq Pr(A\setminus\{a\})P(a)\]
for any $a\in A.$
By induction, $Pr(A\setminus\{a\})\geq \prod_{b\in A\setminus\{a\}} Pr(b)$.
Therefore, the payoff matrix is bounded from below by the matrix for the
uncorrelated SR game. By lemma~\ref{lem:1} we conclude that
$V(X)\geq \frac{1}{1+O_X}$ for the OSR game.
For a negatively correlated distribution we have the opposite signs and
we conclude that $V(X)\leq \frac{1}{1+O_X}$ for the CSR game.
The worst case CSR game is an unordered $X$ in which Searcher can only
search one location. The payoff matrix is diagonal with entries $Pr(a)$.
This game has value $\frac{1}{|X|+O_X}$.
\end{proof}

So far we mainly considered unordered $X$, which is the least restrictive
for Searcher in the OSR game. A total order is most restrictive and we
are only able to solve that game in a limited case.
Before doing that, we first exhibit some examples to show  
that the solution of the OSR game is not trivial.

\medbreak

\textbf{The OSR game on a total order.}
Let $X=\{a,b,c\}$ be a total order $a<b<c$. The payoff matrix
of the game is
\[
\begin{pmatrix}
    Pr(a)&Pr(a,b)&Pr(a,b,c)\\
    Pr(a)&0&Pr(a,c)\\
    0&Pr(b)&Pr(b,c)\\
    0&0&Pr(c)
\end{pmatrix}.
\]
We write $\bar x$ for the event that the search in $x$ is unsuccessful.
A straightforward but tedious computation shows that the solution
of the game depends on the Bayes factor
\begin{equation}\label{factor}
\frac{Pr(a|\bar b, c)}{Pr(a|b)}.
\end{equation}

If it is more than one, Searcher's optimal strategy mixes the searches 
in rows $2,3,4$. If it is less than one, she mixes the searches in rows $1,3,4$.

\medbreak

\textbf{The CSR game on a star.} Let $X=U\cup \{*\}$ where $U$ is unordered
and $u<*$ for all $u\in U$. The Hasse diagram of this poset is a star graph.
A search either consists of $*$ or of $\{u,*\}$. Suppose that $*$ is 
independent of $U$, i.e., $Pr(u,*)=Pr(u)\cdot Pr(*)$. 
If Searcher selects
$\{u,*\}$ with probability proportional to $\frac 1{Pr(u)}$, then Hider
is indifferent between all locations in $U$. The rescue probability in $*$
is proportional to $|U|\cdot Pr(*)$. If $Pr(*)>\frac 1{|U|}$ then Hider
does not hide in $*$. The game reduces to the CSR game on $U$ with value
$\frac 1{O_U+|U|}$. Not all Hider strategies are active.

If $Pr(*)<\frac {1}{|U|}$ then Searcher includes 
the search $\{*\}$ with probability proportional to $\frac1{Pr(*)}-|U|$
to make Hider indifferent. Under this strategy the rescue probability is
$\frac 1{O_X+1}$ for all locations.
If Hider hides in $u$ with probability
proportional to $\frac {1-Pr(u)}{Pr(u)}$ and in $*$ with probability
proportional to $\frac 1{Pr(*)}$,
then the rescue probability is $\frac 1{O_X+1}$.
This is the value of the game.

The CSR game on a star falls apart in two separate cases.
Now consider an OSR game on a total order
$X=\{1,2,\ldots, n\}$.
such that $Pr(A)=0$ for all sets of cardinality $>1$ except for sets $A=\{j,n\}$.
This is the CSR game on a star, again illustrating that the OSR game on a total order
falls apart into different cases.

\begin{theorem}
    Let $X=\{1,2,\ldots,n\}$ be a negatively correlated total order such that $n$ is independent
    of the other locations, i.e., $Pr(A|n)=Pr(A)$ for all $A\subset X\setminus\{n\}$, and
    such that $Pr(n)\leq \frac 1{n-1}$.
    Then the value of the game is $V(X)=\frac 1{O_X+1}$.
\end{theorem}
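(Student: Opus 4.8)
The plan is to sandwich the value between matching bounds, and almost all the work lies in the lower bound. For the upper bound I would use that on a total order every order-reflecting search is automatically a chain, so the OSR and CSR games coincide here. Since $X$ is negatively correlated, the earlier lemma bounding the value of the CSR game on a negatively correlated poset applies verbatim and gives $V(X)\le\frac{1}{1+O_X}$ with no further work. The entire content of the theorem is therefore the reverse inequality: I must produce a Searcher strategy whose rescue probability is at least $\frac{1}{1+O_X}$ against every hiding location $h$.

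The structural idea, which is what negative correlation is really telling us, is that Searcher should never spend success probability on intermediate locations she does not need: to reach a target she jumps straight to it. Concretely, I would restrict Searcher to the two-element searches $\{j,n\}$ (search $j$, skip everything, then search $n$) for $j<n$, together with the singleton search $\{n\}$; all of these are legitimate increasing searches on the total order. Against Hider at a location $j<n$, the only one of these searches that reaches $j$ is $\{j,n\}$, and it does so with searched set $\{j\}$, so the rescue probability at $j$ equals $q_j\,Pr(j)$, where $q_j$ is the weight on $\{j,n\}$. Against Hider at $n$, every search reaches $n$, contributing $Pr(j,n)=Pr(j)Pr(n)$ for each jump search and $Pr(n)$ for the singleton; the independence hypothesis on $n$ is exactly what lets me factor $Pr(j,n)=Pr(j)Pr(n)$.

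I would then choose the weights to equalize across locations. Taking $q_j=\frac{1}{Pr(j)(1+O_X)}$ for $j<n$ makes the rescue probability at each such $j$ equal to $\frac{1}{1+O_X}$; solving for the remaining weight to equalize at $n$ forces $q_n=\frac{1}{1+O_X}\bigl(\frac{1}{Pr(n)}-(n-1)\bigr)$, and a one-line computation using $Pr(j,n)=Pr(j)Pr(n)$ confirms the rescue probability at $n$ is again $\frac{1}{1+O_X}$. Two verifications remain, and this is precisely where the hypotheses are consumed: $q_n\ge 0$ is equivalent to $Pr(n)\le\frac{1}{n-1}$, and using the identity $\frac{1}{Pr(j)}=1+o_j$ one checks $\sum_j q_j=\frac{1}{1+O_X}\bigl(n+O_X-(n-1)\bigr)=1$, so the weights form a genuine mixed strategy. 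Hence Searcher guarantees $\frac{1}{1+O_X}$ at every location, matching the upper bound, and $V(X)=\frac{1}{1+O_X}$.

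The main obstacle is conceptual rather than computational: one has to see that negative correlation makes skipping strictly advantageous, so that optimal Searcher play collapses the total order down to a star-like family of jumps $\{j,n\}$. Once that reduction is made the arithmetic is exactly the star computation carried out earlier, and the bound $Pr(n)\le\frac{1}{n-1}$ emerges naturally as the feasibility threshold for the probability mass Searcher can afford to place on going straight to $n$.
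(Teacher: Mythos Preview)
Your proof is correct and follows essentially the same approach as the paper: the upper bound comes from the CSR lemma for negatively correlated posets (using that OSR and CSR coincide on a total order), and the lower bound restricts Searcher to the jump searches $\{j,n\}$ together with $\{n\}$, which is exactly the reduction to the star game the paper invokes. The only difference is cosmetic---the paper cites the star computation while you spell it out explicitly---and your remark that negative correlation drives the skipping is intuition only, since your actual lower-bound argument (like the paper's) uses just the independence of $n$ and the condition $Pr(n)\le\frac{1}{n-1}$.
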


\begin{proof}
    By lemma~\ref{lem:3} we know that $V(X)\leq \frac 1{O_X+1}$. It suffices to
    find a Searcher strategy that guarantees this rescue probability.
    Suppose Searcher limits herself to a search of only one or two locations,
    one of which is $n$. Then we have the SR game on a star with uncorrelated
    random variables, which by the
    example above has the required value.
\end{proof}

\section{Further generalizations of SR games}

There
are several ways to generalize SR games that may deserve
further study. We discuss some
of them here, without going into a full analysis. 

\medbreak
\textbf{General random variables.}
The most straightforward generalization is to replace
the $\{0,1\}$-valued random variables by arbitrary non-negative 
random variables.
Let's denote the set of locations by $[n]=\{1,\ldots,n\}$ with partial
order $\prec$
so that we can
denote the random variable of location $i$ by $X_i$. 
A possible interpretation of $X_i$ is that the rescue comes with
a certain (random) reward that depends on the location.
For example, an adversarial Hider that is out to do damage
can be more or less detrimental, depending on the location.
In this setting it is natural to replace $\{0,1\}$-valued variables
by weighted variables.
In analogy of equation~\ref{payoff}
we now have a zero-sum game with payoff equal to the expected value 
of the product over the visited locations:
\[\Pi(\sigma,h)=\mathbf E[X_{i_1}\cdot X_{i_2}\cdots X_{i_k}],\]
where $i_1,i_2,\ldots,i_{k-1}$ are the searched locations before 
arriving at the hideout $h=i_k$. 
If the random variables are independent and the expected values
are $\leq 1$, then the payoff matrix for this game is equivalent
to a payoff matrix for Bernoulli random variables. Our results
from section 2 carry over to this case.

If the expected values are not bounded by one, then the game gets
more difficult to solve. We illustrate that for the OSR game on
a total order $[n]$ with expected values $e_i=E[X_i]$. 
If we limit Searcher to searches of consecutive locations, then
we get the payoff matrix of equation~\ref{triangular}:
\[
A=
\left[
\begin{array}{rrrrl}
e_1&e_1e_2&e_1e_2e_3&\cdots&e_1e_2e_3\cdots e_n\\
0 & e_2&e_2e_3&\cdots&e_2e_3\cdots e_n\\
0 & 0 & e_3&\cdots&e_3\cdots e_n\\
\vdots&\vdots&\vdots&\ddots&\vdots\\
0&0&0&\cdots&e_n
\end{array}
\right]
\]
We say that
a subset of consecutive locations $i,i+1,\ldots,j$ is a \emph{run}
if all cumulative products $e_i, e_ie_{i+1}, \ldots, e_ie_{i+1}\cdots e_j$
are $\geq 1$. For a run we have that row $i$ of the matrix dominates
the following rows from $i+1$ up to $j+1$. We can delete these rows
from the matrix. 
If $j$ happens to be the final location $n$, then we only delete rows
up to $j$.
Columns $i$ up to $j+1$ (or $n$) in the remaining matrix
are multiples of column $i$ by factors $1, e_{i+1}, e_{i+1}e_{i+2},\ldots,
e_{i+1}e_{i+2}\cdots e_{j+1}$. 
The column with the minimum multiple dominates
the others, which can be deleted, so that the resulting matrix is
again triangular. Let $k$ be the location of the remaining column.
If Searcher searches $k$, then she also searches the locations from
$i$ up to $k$, so we can replace $X_k$ by $X_i\cdots X_k$. Similarly,
we can replace $X_{j+2}$ by $X_{k+1}\cdots X_{j+2}$. Essentially,
we have an OSR game on a total order, with a reduced number of locations.
By removing the runs, we end up with a game in which only the
final location possibly has expected value $>1$. The previous solution
of the OSR game with uncorrelated Bernoulli random variables carries
over to this situation.
\medbreak

\textbf{Random rescue sets.} Let $X$ denote the poset of locations, as
before. A draw from $\{0,1\}$-valued random variables~$\beta_x$
corresponds to a random subset $R\subset X$, which we call the
\emph{rescue set}. It is selected by Nature.
The players do not know which $R$ is drawn, but they do know Nature's probability
distribution on the family of subsets $2^X$.
For instance, weather conditions could limit the search to $k$ out of
$n$ locations, but it is impossible to predict which locations will
have bad weather.
In this example
Nature draws draws uniformly from the subsets of cardinality $k$. 
If $k=1$ then we have the special case of a CSR
game on an unordered $X$ in which every location has rescue probability
$\frac 1n$. If $k=n$ then all locations can be searched and the only
restriction is the order. This is an example of the hypergraph incidence game~\cite{fgt}.

In an SR game the search halts once it reaches a location
$x\not\in R$.
Bad weather may prevent a successful rescue in a certain location, 
but it does not stop the operation.
A search can continue even if it
visits a location that is not in $R$.
Let $S\subset X$ be the set of searched locations, i.e., the image of $\sigma$.
Then the payoff is one if $h\in R\cap S$ and zero otherwise.
This is an extension of the hypergraph incidence game, in which Searcher
chooses an edge $S$, Hider chooses a location $h$, and Searcher wins
if $h\in S$. 

\medbreak
\textbf{Search and Recovery.} 
In 2009 Air France Flight AF 447 
disappeared on its way from Rio to Paris over the middle of the Atlantic.
After several unsuccessful search operations, the wreckage was recovered
two years later using Bayesian search~\cite{stone}. In this approach,
the probability distribution of the location of the wreckage (the prior)
is updated after a search by a Bayesian factor that compensates for
the probability of detection. We can easily adapt SR games to this 
situation, by interpreting $\beta_x$ as the probability of detection
instead of the probability of successful search.
As in the case of rescue sets in the previous example, a search is allowed to continue if
$\beta_x$ is equal to zero.
This is Bayesian search against a worst-case prior distribution.
If Hider can move once search is over and before a new search begins,
then the game models a manhunt or a hunt for prey. Similar games have been
studied by Owen and McCormick~\cite{owen} and Gal and Casas~\cite{casas}.

\section{Conclusion}
We have presented a new type of Search and Rescue game on partially ordered
sets. We were able to solve the game in polynomial time in terms of the
number of locations for uncorrelated random variables.
We showed that the SR game on a tree can be interpreted as a game with
correlated random variables, for which the distribution is completely
reducible. 
This game was solved by Lidbetter under the restriction that 
Bayesian factors are bounded by one from above.
If they are bounded by one from below, the solution appears 
to be more difficult.
We conjecture that an optimal search involves a backjumping search.
A full solution of the game for general Bayesian factors seems to
be out of reach,
but we did find general bounds on the value of the game.
We hope that SR games on posets and Bayesian networks 
provide a fruitful avenue of
further research and applications.

\bibliographystyle{siam}
\bibliography{sandr}
\end{document}